\newtheorem{theorem}{Theorem}
\newcolumntype{P}[1]{>{\centering\arraybackslash}p{#1}}
\newcommand{\ph}[1]{{\color{black}#1}}
\begin{document}

\title{ Entanglement of weighted graphs uncovering transitions in variable-range interacting models }

\author{Debkanta Ghosh, Keshav Das Agarwal, Pritam Halder, Aditi Sen(De)}

\affiliation{Harish-Chandra Research Institute, A CI of Homi Bhabha National Institute,  Chhatnag Road, Jhunsi, Allahabad - 211019, India}

\begin{abstract}

The cluster state acquired by evolving the nearest-neighbor (NN) Ising model from a completely separable state is the resource for measurement-based quantum computation.  Instead of an NN system, a variable-range power-law interacting Ising model  can generate  a  genuine multipartite entangled weighted graph state (WGS)  that may reveal intrinsic characteristics of the evolving Hamiltonian. We establish that the pattern of generalized geometric measure (GGM) in the evolved state with an arbitrary number of qubits is sensitive to fall-off rates and the range of interactions of the evolving Hamiltonian.  We report that the time-derivative and time-averaged GGM at a particular time can  detect  the transition points present in the fall-off rates  of the interaction strength,  separating different regions, namely long-range, quasi-local and local ones in one- and two-dimensional lattices with deformation. Moreover, we illustrate that in the quasi-local, and local regimes there exists a minimum coordination number in the evolving Ising model for a fixed total number of qubits which can mimic the GGM of the long-range model. In order to achieve a finite-size subsystem from the entire system, we design a local  measurement strategy that allows a WGS of an arbitrary number of qubits  to be  reduced to a local unitarily equivalent WGS having fewer qubits with modified weights.

\end{abstract}
\maketitle
\section{Introduction}
\label{sec:intro}

 The computational time of certain mathematical problems like prime factorization of integers \cite{factor} can be reduced utilizing the  quantum-mechanical principles compared to  classically available algorithms -- an important milestone that establishes the significance of building quantum computers \cite{nielsen_chuang_2010}. Since then, quantum speedups that cannot be achieved by the best classical computer have been demonstrated in distinguishing classes of functions \cite{deutsch1992,Bernstein1997} in a search problem from databases \cite{grover1997} in estimating phases of an operator \cite{phase95,kitaev02} and for solving linear systems of equations \cite{hhl09} to name a few. Although entanglement \cite{horo_ent_09} has been shown to be beneficial for accomplishing higher efficiency in certain quantum protocols, including various computational tasks \cite{linden01,jozsa03,nest13} and quantum communication \cite{cryp_1991,bennet1992,cryp_1992,tele_1993,dc1996,cryp_2000,cryp_2002,BENNETT20147,Ren2017,guo2019}, the resource required to attain quantum supremacy in quantum algorithms has yet to be determined \cite{ahnefeld22,naseri22}.

On the other hand, entanglement is one of the prerequisites for measurement-based quantum computation (MBQC) \cite{Rauss2001, Hein2004,hein2006entanglement,casati2006quantum,browne2006oneway,briegel2009measurement}  which allows one to construct all  the universal quantum gates required in a quantum computer \cite{divincenzo95}. In the paradigm of MBQC or \textit{one-way quantum computation} \cite{Rauss2001,browne2006oneway}, generation of genuine multipartite entangled (GME) states and, more specifically, multipartite cluster states \cite{briegel01, Yokoyama2013} remains crucial.
These states are  specific types of stabilizer states, known also as graph states, which are employed as a universal resource for  measurement-based circuits followed by Clifford operations, and they are easy to simulate through classical computers \cite{scott_2004_pra, simon_2006_pra,nielsen_chuang_2010}. The corresponding states have been successfully generated on several physical platforms  and are used to show a variety of information processing tasks \cite{zhang06,Prevedel2007,tame07,lanyon13,Bell_2013,Yokoyama2013,bell_14,bell2_14,mamaev19}. 
Beyond one-way models, numerous kinds of universal resource states, such as two-dimensional ($2$D) cluster \cite{briegel01,Brell_2015}, Affleck-Kennedy-Lieb-Tasaki-type states \cite{aklt88}, Toric codes \cite{KITAEV20032}, and weighted graph states (WGSs) \cite{ahufinger06,hartmann_briegel_2007,Anders_2007,peng12}, are also shown to be useful for quantum computation \cite{beyond1way07,grossnovel07}.

The original proposal of producing cluster states, which are necessary for MBQC,  solely employs the nearest-neighbor (NN) Ising model \cite{briegel01}. However, long-range (LR) quantum spin models are naturally created by neutral atoms in optical lattices interacting via dipole interactions \cite{Lewenstein2007,Sowiński2012} or trapped ions \cite{iontrap08,iontrap09,iontrap12,Islam_2011,Britton_2012} under Coulomb potential \cite{iontrap_cirac04}.   Furthermore, such LR models frequently exhibit critical phenomena and possess phases that are not captured by short-range (SR) quantum spin models
\cite{koffel2012,vodola2014,lr_ising_vodola_2016,lr_ising_defenu_2020,lr_model_romanroche2023,vodola23_prb}. For instance, counter-intuitive results in LR models include  breaking of the Mermin-Wagner-Hohenberg theorem \cite{Mermin1966,Hohen1967,peter2012}, the rapid propagation of correlations \cite{Maghrebi2016,gong2017,ares2018,Ares_2019}, and violations of entanglement area law \cite{cramer2010,koffel2012,scha2013,cadarso2013}. 
In addition, it has been demonstrated that  on one hand, the ground or thermal states of the LR model possess a high multipartite entanglement or more spreading of  bipartite entanglement and, on the other hand, multipartite entangled states can also be  created through the LR models \cite{koffel2012,vodola2014,ren20,lakkaraju2020freezing,lgcl_asd_pla2021,lgcl_asd_pra2022,francica_22_prb, gong_cirac_23_prl}.

In this context,  it is interesting to explore the properties of the dynamical state obtained via interacting LR spin systems and its suitability for MBQC, starting from a suitable product state. In particular, investigations have focused on the scaling of block entanglement, two-point correlation functions, multipartite entanglement \cite{aditi_pra_2010,biswas14}, and quantum discord \cite{olliver2001,Modi2012,Bera_2018} of the state following evolution under the LR Ising model \cite{dur_hartmann_prl_2005,mahto_shaji_pra_2022}. 
Moreover, random quantum circuit implementation by carrying out measurements on weighted graph states \cite{ran_circuit_by_wg08} and robust entanglement concentration protocols \cite{economou2023} for generating high-fidelity Greenberger–Horne–Zeilinger states \cite{Greenberger2007}  have been proposed in the presence or absence of coherent and incoherent noise. 

The genuine multipartite entanglement \cite{meyer02,hashemi12,sadhukhan17,Gour2018monogamy,xie21} of the dynamical state in various geometries and its characteristics under local measurements have not yet been addressed in the context of variable-range (VR) interactions included in the evolution Hamiltonian for one-dimensional $(1D)$ and $2$D lattices. On one hand, these investigations can highlight the potential of the evolved state as a resource in quantum information protocols and identify phases  and critical phenomena that are present in the evolution Hamiltonian. On the other hand, the impact of local measurements can indicate how to decouple certain parts of the circuits from the entire circuits, which can play a vital role on the performance of the computation. To accomplish this goal, we first show that the evolved state generated via VR interactions, referred to as a weighted graph state,  is GME having nonvanishing generalized geometric measure \cite{wei03} (GGM) \cite{aditi_pra_2010} and its pattern with time depends on the fall-off rate of interactions and the coordination number.  
Moreover, we know that the power-law fall-off in interactions of the LR transverse Ising model undergoes transitions from long-range to quasi-local and from quasi-local to local ones \cite{koffel2012,vodola2014,lr_ising_vodola_2016,lr_ising_defenu_2020,lr_model_romanroche2023,vodola23_prb}. We report that the time-derivative and time-averaged GME content of the WGSs  can detect transitions, present in the fall-off rates in one- and  two-dimensional lattices. More importantly, we demonstrate that if the $2$D square lattice is distorted with an arbitrary angle, resulting in a hexagonal lattice, the discontinuity in the derivative of the multipartite entanglement with respect to time or fall-off rate can predict the transition points in the fall-off rate of the evolving Hamiltonian. Further, we determine that there exist  threshold values on the total number of qubits and coordination number above which the GME state produced via VR interactions remains constant, thereby providing  lower bounds on the number of two-qubit gates required to simulate the interacting Hamiltonian used for producing WGSs. The results can be important during the realization of the WGS in laboratories e.g., with superconducting qubits \cite{Song2017,ParallelSong2017,Pedersen2019,Song2019}.

Certain WGSs with a minimal number of qubits may be necessary during the implementation of certain algorithms, and in such cases, the goal will be to generate WGSs with a required number of qubits from the same state with an arbitrary number of qubits by making local measurements. For cluster states, such a measurement approach is well known \cite{rauss_pra2003}. In this paper, we present a local measurement strategy that allows us to  obtain a local unitary equivalent WGS with adjusted weights.

The paper is structured in the following manner. In Sec. \ref{sec:Weighted Graph State}, we introduce the exact expression  of the WGS in various lattice geometries. The closed forms of GGM are calculated in Sec. \ref{sec:GGMWG}. Using the trends of GGM, detection of transitions in the fall-off rates of the evolution Hamiltonian  is presented in Sec. \ref{sec:alpha_indicator} while the investigations of emulating the GME state in the LR  model through the SR  ones is examined in Sec. \ref{sec:mimick}. We discuss the impact of local measurements on the WGS in Sec. \ref{sec:measurement_effect}.  Concluding remarks are included in Sec. \ref{sec:conclu}.

\section{Generation of A Weighted Graph State via A long-range Hamiltonian with varying interaction strength}
\label{sec:Weighted Graph State}

% \subsection{Graph state/Weighted graph state}
A weighted graph state of $N$ parties can be expressed by an underlying graph $G=(V,E)$, with the $N$ parties forming the set of vertices, $V=\{1,2,\ldots,N\}$ connected by a set of edges $E\subset V\times V$ based on the interactions among the parties. The adjacency matrix of $G$ is $\Phi_E=(\phi_{ij})$, where the weight $\phi_{ij}$ denotes the interaction between the parties (vertices) $i$ and $j$. We consider real weights, i.e., $\Phi_E$ is a real and symmetric matrix. Therefore the underlying graph $G$ is simple and undirected, i.e. $\phi_{ii}=0$ and $\phi_{ij}=\phi_{ji}\forall (i,j)\in V\times V$.
\begin{figure*}
\centering
    \includegraphics[width=\linewidth]{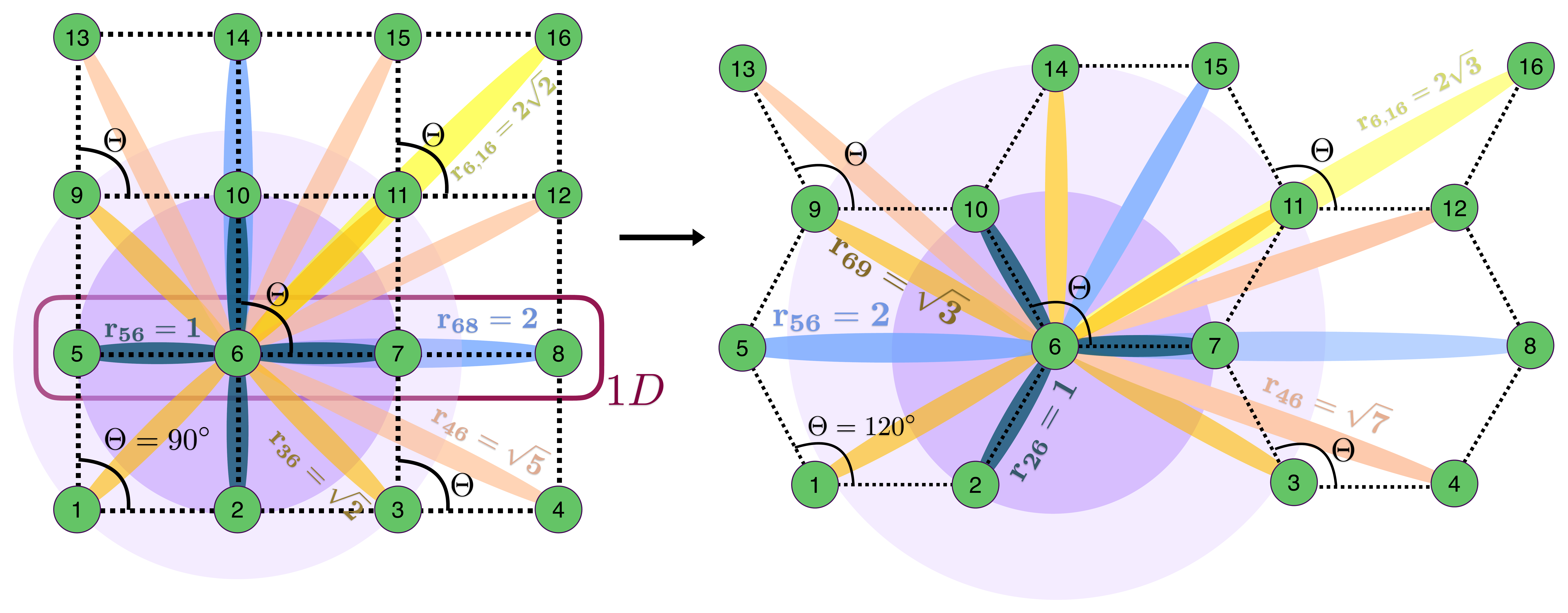}
    \caption{The two-dimensional lattice with sites $L\times L$ ($L=4$) can be  transformed from the square (left) to the honeycomb (right) lattice, forming a fully connected weighted graph state with different $\Theta$ in open boundary condition. Connections are shown for a single qubit with the corresponding distances. $\Theta=90^\circ$ and $120^\circ$ correspond to the square and the honeycomb lattices respectively. Each site $\va{i}=(i_x,i_y)$ is indexed as $i=(i_x-1)*L+i_y$, where $i_x\in[1,L]$, $i_y\in[1,L]$, and $i_x,i_y\in\mathbb{N}$ . Different colors (shades) signifiy different strengths of the interactions, which decreases with increasing distance.}
    \label{fig:schematic3}
\end{figure*}

To construct a spin-$\frac{1}{2}$ graph state, each vertex is a qubit initialized in $\ket{+}=(\ket{0}+\ket{1})/\sqrt{2}$, where $\ket{0}$ and $\ket{1}$ are the eigenstates of the Pauli matrix $\sigma_z$ with the corresponding eigenvalues $1$ and $-1$. Furthermore, the two-qubit interacting unitary, $U_{ij}=e^{-i H_{ij}t} \forall i,j\in V$, acts on each vertex, entangling the pairs of qubits (creating the edges), which is achieved by the two-qubit Hamiltonian,  $H_{ij}=\phi_{ij}\Big(\frac{1-\sigma^z_{(i)}}{2}\Big)\Big(\frac{1-\sigma^z_{(j)}}{2}\Big) $. Hence, the unitary transformation, in this case, takes the diagonal form as $U_{ij}=\text{diag}(1,1,1,e^{-i g_{ij}(t)})$, known as the controlled-phase gate with $g_{ij}(t)=\phi_{ij}t$. Here, $\phi_{ij}$, which comprises the information of the range and strength of the interaction, is constant throughout the time. Therefore, the total interacting Hamiltonian acting on the graph can be written as $H=\sum_{i=1}^{N-1}\sum_{j>i}^N H_{ij}$

Notice that the Hamiltonian remains unchanged under the exchange of two indices, $H_{ij}=H_{ji}$ and $[H_{ij},H_{jk}]=0$, which satisfies the criteria for the graph having undirected and unordered edges respectively. Finally, the graph state for $N$-party system can be expressed as 
\begin{eqnarray}
    \ket{\Psi_G(t)}_N &=&\prod_{i,j} U_{ij}(t) \ket{+}^{\otimes N}.
    \label{eq:wgs_def}
\end{eqnarray}

Considering only nearest-neighbor interactions in one dimension, i.e., $H=\sum_i H_{i,i+1}$, and setting $\phi_{i,i+1}t=(2n-1)\pi$ for any $n\in \mathbb{N}$, (with $\mathbb{N}$ being the set of natural numbers) which transforms the edge connecting unitaries to be $U_{i,i+1}=\text{diag}(1,1,1,-1)\equiv \text{controlled-NOT}$, we get the well-known MBQC resource state, known as the cluster state \cite{briegel01} having the form $\ket{\Psi_G((2n-1)\pi)}_N =\frac{1}{2^{N/2}} \bigotimes_{i=1}^N\big(\ket{0}_i+\ket{1}_i\sigma_{i+1}^z\big), $
using the convention that $\sigma_{N+1}^z=\mathbb{I}$ in the open boundary condition.

Instead of this simplified  scenario, we concentrate on the more general experimentally viable picture of all-to-all connectivity with power-law interaction strength, i.e.,
\begin{equation}
    \phi_{ij}=|i-j|^{-\alpha}=r_{ij}^{-\alpha},
    \label{eq:stre_open}
\end{equation}
where $\alpha \geq 0$ dictates the fall-off rate of the power-law interactions and $r_{ij}$ denotes the spatial distance between the lattice sites, $i$ and $j$ . For the one-dimensional spin system, $r=|i-j|$, whereas for a $2$D  square lattice $r=|\va{i}-\va{j}|$, which is the Euclidean distance between the site at $\va{i}$ and the site at $\va{j}$. For example, $r=1$ represents the nearest-neighbor NN sites in both $1$D and $2$D square lattices, while $r=2$ and $\sqrt{2}$ are the distances of the next-nearest neighbor (NNN) in $1$D and $2$D square lattices respectively (see Fig. \ref{fig:schematic3} for illustration). For $N=2$, one obtains the corresponding state, $\ket{\Psi(t)}_2 = \frac{1}{2}\big(\ket{00}+\ket{01}+\ket{10}+e^{-i g_{1 2}}\ket{11}\big)$, where $g_{12} = \phi_{12}t$. Similarly, for $N=3$, we have
\begin{eqnarray}
    \nonumber \ket{\Psi(t)}_3 &=& \frac{1}{2\sqrt{2}}\big(\ket{000}+\ket{001}+\ket{010}+e^{-i g_{2 3}}\ket{011} \\ \nonumber &+& \ket{100}+e^{-i g_{1 3}}\ket{101}+e^{-i g_{1 2}}\ket{110}\\ &+& e^{-i (g_{1 2}+g_{1 3}+g_{2 3})}\ket{111}\big),
    \label{eq:3qubit_all}
\end{eqnarray}
with each $g_{ij}=\phi_{ij}t$, defining the weights of the graph. From the above form of the states, we see that for the $N$-qubit state expressed in the $2^N$ computational basis, the coefficient of the basis vectors $\{\ket{\eta}=\ket{a_1 a_2 \ldots a_N}\}$ where $\eta$ is the decimal equivalent of $\ket{a_1 a_2 \ldots a_N}$ with $a_i= 0,1 \, \forall i$ turns out to be $c_{\eta}(t)=\exp\bigg(-i\sum_{i=1}^{N-1}\sum_{j>i}^N g_{i j}(t)\ph{ a_i a_j}\bigg)$. Therefore, the $N$-qubit weighted graph state  can be written as 
\begin{eqnarray}
    \nonumber \ket{\Psi_G(t)}_N &=&\frac{1}{2^{N/2}}\sum_{\eta=0}^{2^N-1}\exp\Big(-i\sum_{i=1}^{N-1}\sum_{j>i}^{N} g_{i j}(t)\ph{ a_i a_j}\Big)\ket{\eta} \\ &=& \frac{1}{2^{N/2}}\sum_{\eta=0}^{2^N-1}c_\eta(t)\ket{\eta}.
    \label{eq:wgs}
\end{eqnarray}
Interestingly, we will demonstrate that the coefficients of $c_{\eta}(t)$ carrying the signature of variable-range interactions of the system can reveal counter-intuitive properties which the conventional cluster state does not possess.

Let us consider $2$D lattices by introducing a distortion in the $2$D square lattice of size $L\times L$, such that for a site $\va{i}$, with the index $(i_x, i_y)$, $i_x\in[1,L]$, $i_y\in[1,L]$ with $i_x,i_y\in\mathbb{N}$, the distortion is as follow:\\
 \textit{Case I- $i_x+i_y$ is even}. Sites indexed $(i_x+1, i_y)$, $(i_x, i_y-1)$, and $(i_x, i_y+1)$ are the nearest-neighbor sites and the angle between vectors joining $(i_x, i_y)$ to $(i_x, i_y+1)$ and $(i_x, i_y)$ to $(i_x+1, i_y)$ is $\Theta\geq90^\circ$. \\
 \textit{Case II- $i_x+i_y$ is odd}. Sites indexed $(i_x-1, i_y)$, $(i_x, i_y-1)$, and $(i_x, i_y+1)$ are the nearest-neighbor sites and the angle between vectors joining $(i_x, i_y)$ to $(i_x, i_y+1)$ and $(i_x, i_y)$ to $(i_x-1, i_y)$ is $\Theta\geq90^\circ$.
 
As shown in Fig. \ref{fig:schematic3},  $\Theta=90^\circ$ represents the square lattice while $\Theta=120^\circ$ corresponds to the honeycomb lattice, where the lattice is tiled by regular hexagons. For arbitrary lattice angle $\Theta$, the $2$D lattice is deformed and is tiled by symmetric but non-regular hexagons with distance between one pair of parallel sides less than the distance between the other two parallel pairs of the hexagon, and the pair with the smallest side oriented in one axis ($x$-axis in Fig. \ref{fig:schematic3}). The lattice is scaled such that the nearest neighbors are separated by a unit distance.  Note that this deformation reduces the $\mathbb{Z}_4$ symmetry of the square lattice to $\mathbb{Z}_2$ symmetry for other lattice structures in Fig. \ref{fig:schematic3}.

\section{Genuine Multipartite entanglement in a Weighted Graph State}
\label{sec:GGMWG}

Let us analyze the multipartite entanglement content of the WGS originated from the variable-range interacting evolution operator. As we will illustrate, such investigations establish capability of the WGS as a resource for quantum information processing tasks and, at the same time, can uncover quantum features of the evolution Hamiltonian. To characterize its resource, we focus on its genuine multipartite entanglement content. A multipartite pure state is genuinely multipartite entangled  when it is not separable across any bipartition. Although numerous multipartite entanglement measures \cite{Blasone2008,
Or2008,Vidal2008,Rom2008,Wei2010, Eisert01, meyer02,hashemi12,sadhukhan17,Gour2018monogamy,xie21,wei03} are proposed which are shown to be crucial for quantum information protocols \cite{buzek97,hillery99,bruss04,Shi_2010}, they are not always easy to compute, specially for large system size. We choose the generalized geometric measure to quantify its GME content. The GGM, $\mathcal{G}$, is a distance-based measure, defined as the distance of a given state from the set of non-genuinely multiparty entangled states. For a given \(N\)-party pure state, $|\Psi\rangle$, GGM can be computed by the expression written as
\begin{equation}
    \mathcal{G}(|\Psi\rangle) = 1-\max_{A:B}\{\lambda^2_{A:B} | A\cup B = \{1,2,\dots N\}, A\cap B=\O \},
    \label{eq:ggm_def}
\end{equation}
where $A$ and $B$ are any arbitrary bipartitions of the multiparty state, and $\lambda_{A:B}$ is the maximum Schmidt coefficients in the $A:B$ bipartition, along with maximization over all possible bipartitions.

\subsection{Computation of reduced density matrices of WGSs}

From the definition of the GGM, we know that the largest eigenvalue among all the bipartitions of the given state  contributes to the computation of GGM.  We use the projected entangled-pair states (PEPS) description of the weighted graph states \cite{verstraete_pra_2004, dur_hartmann_prl_2005, hartmann_briegel_2007} to evaluate the reduced density matrices. We argue that the single-site reduced density matrix has maximal Schmidt coefficients among all possible density matrices.

For an all-to-all connected WGS, each qubit is acted upon by $N-1$ commuting unitaries and, therefore, a single qubit can be replaced by $N-1$ virtual qubits, each in the $\ket{+}$ state initially. The $\bar{k}_l$ virtual qubit in the $k$th site interacts with the $\bar{l}_k$ virtual qubit at site $l$ by the unitary $U_{kl}$. These virtual qubits now form $N(N-1)/2$ valence bond pairs, and each unitary acts on each valence bond pair independently, in the complex Hilbert space  $(\mathbb{C}^2)^{N(N-1)}$. Therefore, the weighted graph state in Eq. (\ref{eq:wgs_def}) can be written upto normalization as $\ket{\bar{\Psi}_G(t)}= \underset{\bar{k}_l \bar{l}_k}{\otimes}\ket{\zeta}_{\bar{k}_l \bar{l}_k} = \underset{\bar{k}_l \bar{l}_k}{\otimes}U_{kl} \ket{+}_{\bar{k}_l}\ket{+}_{\bar{l}_k}$ where 
\begin{equation}
    \ket{\zeta}_{\bar{k}_l \bar{l}_k} = \frac{1}{2}\big(\ket{0_{\bar{k}_l}0_{\bar{l}_k}}+\ket{0_{\bar{k}_l}1_{\bar{l}_k}}+\ket{1_{\bar{k}_l}0_{\bar{l}_k}}+e^{-i g_{kl}}\ket{1_{\bar{k}_l}1_{\bar{l}_k}}\big).
\end{equation}
The original state is now recovered upto normalization by local projection, $P_k = \ket{0}_k\bra{0_{\bar{k}_1}0_{\bar{k}_2}\dots0_{\bar{k}_{N-1}}}+\ket{1}_k\bra{1_{\bar{k}_1}1_{\bar{k}_2}\dots1_{\bar{k}_{N-1}}}$. i.e., for each original qubit, the $N-1$ virtual qubit is now projected back to the two-dimensional Hilbert space, where only the completely polarized states of virtual qubits (all $\ket{0}$ and all $\ket{1}$) at a particular site are projected out. Finally, the weighted graph state as given in Eq. (\ref{eq:wgs}) is recovered.
An all-to-all connected WGS with the underlying graph as $G=(V,V\times V)$ has the adjacency matrix $\Phi_E=(\phi_{ij})$ defining the controlled-phase gates $U_{ij}$.  

For the state $\displaystyle \ket{\Psi_G}=\Big(\prod_{(i,j)\in V\times V} U_{ij}\Big) \Big(\bigotimes_{k\in V}\ket{+}_k\Big)$, an arbitrary bipartition is done by dividing the sites into two subsystems $A$ and $B$, such that $A\cup B=V$ and $A\cap B=\emptyset$. To write the local density matrix of a subsystem $A$, we take the partial trace of the density matrix of the whole system over the subsystem $B$. Therefore, the unitaries which act only on the sites in the subsystem $B$, i.e., $U_{ij}$ such that $i,j\in B$, annihilate, due to their commuting nature and the cyclic property of the trace. We can express the reduced density matrix of the subsystem $A$ as
\begin{eqnarray}
    \rho_{A}&=&\sum_{k,k^\prime\in A}U_{k,k^\prime}\trace_B[\ket{\Psi^\prime}\bra{\Psi^\prime}]U^\dagger_{k,k^\prime}=\sum_{k,k^\prime\in A}U_{k,k^\prime}\rho^\prime_{A}U^\dagger_{k,k^\prime},\nonumber\\
    \ket{\Psi^\prime}&=&\bigotimes\limits_{k\in A,l\in B}^{}U_{k,l}\ket{+}_k\ket{+}_l.
    \label{eq:psi_reduced}
\end{eqnarray}
Since the unitaries $U_{k,k^\prime}$ act on the subsystems of $A$, so the eigenvalues of $\rho_A$ and $\rho_A^\prime=\trace_B[\ket{\Psi^\prime}\bra{\Psi^\prime}]$ are the same. With a subsystem $A$ of $n$ sites ($1\leq n<N$), the underlying graph of $\ket{\Psi^\prime}$ is $G^\prime\subset G$ with only the edge where each site of $A$ is connected to each site of $B$, contributing to the eigenvalues. Computing the reduced density matrices $\rho_A^\prime$ by using $\ket{\bar{\Psi}^\prime}$ which is $\ket{\Psi^\prime}$ in the PEPS formalism, involves three steps to keep track of indices in Eq. (\ref{eq:psi_reduced}).
\begin{enumerate}
    \item Projection on virtual qubits in $B$, $\ket{\Psi^P}=P_B\ket{\bar{\Psi}^\prime}=\bigotimes\limits_{l\in B}^{}\left[\prod_{k\in A}U_{kl} \ket{+}_{\bar{k}_l}\ket{+}_l\right]$.
    \item Partial trace over $B$, $\bar{\rho}_A = \bigotimes_{l\in B}\bar{\rho}_A(l)$, where each term is described in Eq. (\ref{eq:rho_l}) in  Appendix \ref{sec:WGS_PEPS}. 
    \item Projection on virtual qubits in $A$, $\rho^\prime_A=\underset{l\in B}{\ostar}\bar{\rho}_A(l)$, where $\ostar$ is the element-wise product of matrices, known as the Hadamard (\textit{Schur}) product.
\end{enumerate}

These steps are further explained in  Appendix \ref{sec:WGS_PEPS}. For the following discussion, we use the normalized form of the final Hadamard product form of $\rho^\prime_A$ as $\rho_A$.

The above procedure can be easily followed for computing single-site reduced density matrices for $A_k=\{k\}$ i.e., $|A_k|=n=1$. The $k$th single-site reduced density matrix of the system can be written as $\rho_k = \underset{l\in B}{\ostar}\rho_k(l)$ where  $\rho_k(l)=\frac{1}{2}\left(\ket{+}_k\bra{+}_k+\ket{\chi_l}_k\bra{\chi_l}_k\right)$, with $\ket{\chi_l}_k=\frac{1}{\sqrt{2}}\left(\ket{0}_k+e^{-i \phi_{kl}}\ket{1}_k\right)$. The closed (normalized) form  of  $\rho_k$ can be expressed as
\begin{eqnarray}
    \rho_k=\begin{pmatrix}
    \frac{1}{2} & \frac{1}{2^N}\prod_{\substack{j=1 \\ j\neq k}}^N(1+e^{i \phi_{kj}})\\
    \frac{1}{2^N}\prod_{\substack{j=1 \\ j\neq k}}^N(1+e^{-i \phi_{kj}}) & \frac{1}{2}
  \end{pmatrix}.
  \label{eq:single_site}
\end{eqnarray}

\subsection{Maximizing the Schmidt value of a weighted graph state}

By calculating GGM numerically upto $N=14$, we find that the maximum eigenvalue always comes from the single-site reduced density matrices as shown in Appendix \ref{sec:WGS_GGM}. Combining both the analytical and the numerical results, the closed form of GGM for the WGS, by diagonalizing the single-site density matrix in Eq. (\ref{eq:single_site}), can be given as
\begin{equation}
     \mathcal{G}(N,z,\alpha,t)=\frac{1}{2}-\frac{1}{2}\Bigg|\prod_{r=1}^z\cos\bigg(\frac{t}{2r^{\alpha}}\bigg)\Bigg|,
     \label{eq:ggm_open}
\end{equation}
where $z$ is the range of the interaction, which can also be referred to as the coordination number. 
 Note that for the all-to-all connected lattice, we have $1\leq z \leq N-1$.

\begin{figure}[h]
    \centering
    \includegraphics[width=\linewidth]{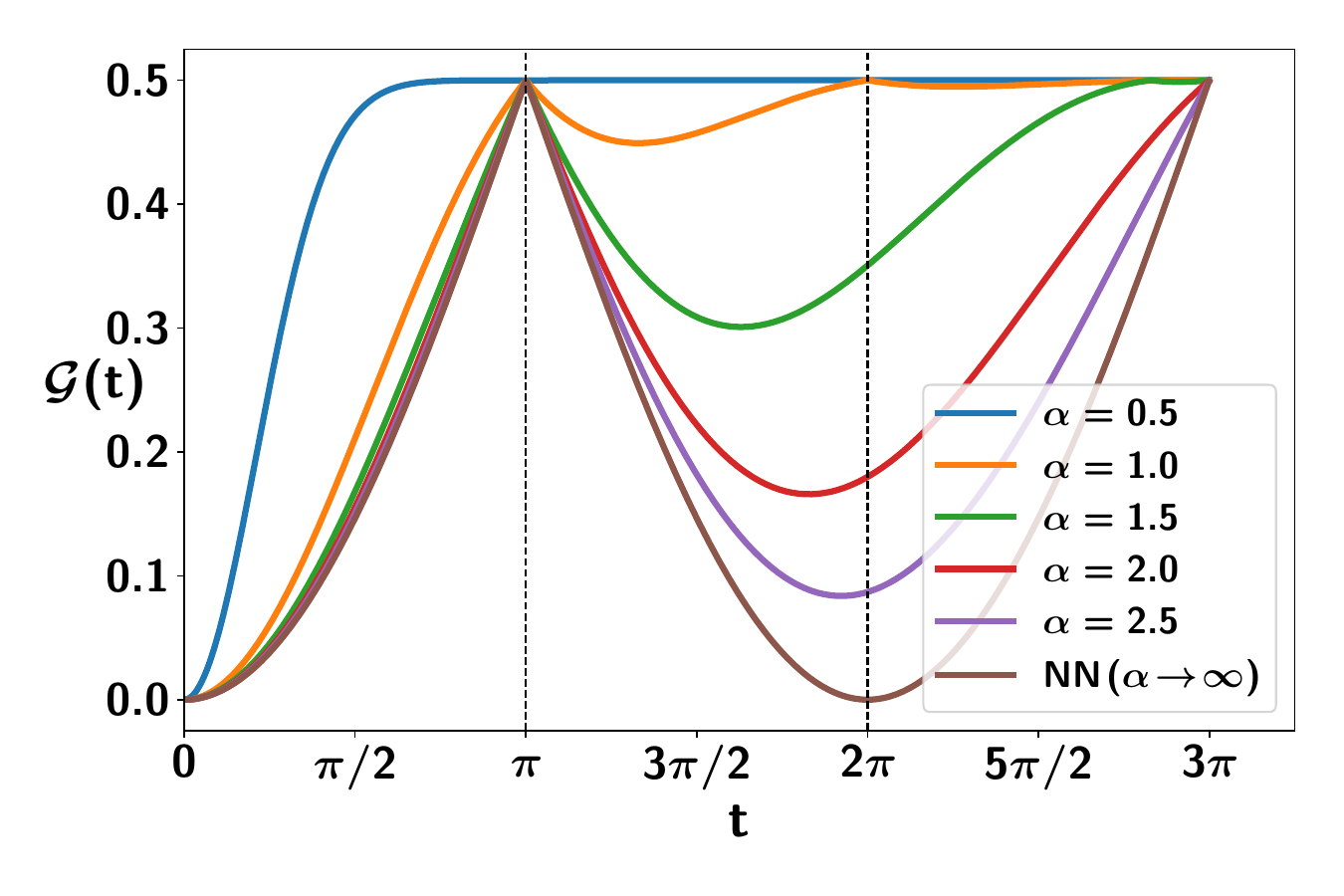}
    \caption{(Color online.) GGM, $\mathcal{G}$ (vertical axis) against time $t$ (horizontal axis) for the $N=5000$ all-to-all connected WGS ($z=N-1$) in a $1$D lattice with open boundary condition. Different lines represent different values of fall-off rate $\alpha$ (from below, \(\alpha\) increases). $\mathcal{G}$ reaches maximum value $0.5$ at $t=(2n+1)\pi \forall n\in \mathbb{Z}$ irrespective of $\alpha$ since  constant nearest-neighbor interactions are always present in all the situations. Both the axes are dimensionless.} 
    \label{fig:ggm_signal}
\end{figure}
{\bf Remark 1.} An unavoidable period exists in the weighted graph state if all the weights are rational numbers ($\mathbb{Q}$).
%\end{corollary}
\begin{proof}
    The product of cosines can be written as a sum of cosines. For a WGS with the GGM contribution from the site $p$, with weights $J_{pj}$ of different sites $j$, GGM can be expressed as
    \begin{equation}
        \mathcal{G}(\{J_{pj}\},t) = \frac{1}{2}\left(1-\sum\limits_{perm \in \mathcal{S}}^{}\cos\sum_{i}\Big((-1)^{perm(i)}J_{pi}\Big)t\right),
        \label{eq:cos_sum}
    \end{equation}
    where $\mathcal{S}$ is a set of $N$-bit strings with the first bit as zero. Therefore, the cardinality of $\mathcal{S}$ is $2^{N-1}$. The total sum over $\mathcal{S}$ denotes the different possible summations that arise in the conversion of the product of cosines to the sum of cosines.

    The sum of cosines has a period which is the lowest common multiple of all the arguments in cosines. Therefore,  the weights $J_{pi}\in\mathbb{Q}$ and $\mathcal{G}$ is periodic in $t$, which in the case of $\alpha=1$ scales as $O(N!)$.
\end{proof}

{\bf Remark 2.}
The weighted graph state has no period if $\alpha$ is not an integer in one dimension.

\begin{proof}
    If $\alpha\notin\mathbb{N}$, $J_{pi}\notin\mathbb{Q}$, which follows from the Eq. (\ref{eq:cos_sum}), shows that $\mathcal{G}$ is aperiodic.
\end{proof}

\section{Recognizing transition in fall-off rates via the GGM of a Graph state }
\label{sec:alpha_indicator}

To create the weighted graph state with the help of the LR Ising Hamiltonian having $z=N-1$, the initial state is taken to be the equal superposition of all the elements in the computational basis, which is a fully separable  state having vanishing GGM. Suppose at time $t<0$, there is a strong magnetic field (with respect to the nearest-neighbor interaction strength) in the positive $x$-direction, causing the initial state to be completely polarised in the $\ket{+}$ state \cite{Kyaw_2018}. At $t>0$, sudden quench to the LR Ising Hamiltonian by abruptly turning off the magnetic field also leads to the WGS. The amount of multipartite entanglement generated in the resultant network depends upon the weight of the connection between the vertices, determined by the fall-off rate of the evolving Hamiltonian and the coordination number.

\subsection{Detection of transition in fall-off rate with a weighted graph state in one dimension}

Based on the LR system Hamiltonian and the underlying lattice, the fall-off rate $\alpha$ possesses various transition points, which separates the non-local (strong long-range order) from the quasi-local (weak-long range) order ($\alpha^*_{LR}$) and the quasi-local order from the local (short-range) one ($\alpha^*_{SR}$) in the ground state of the system. In the quantum spin models, analysis shows that $\alpha^*_{LR}=d$, where $d$ is the dimension of the lattice, exhibits the transition from long-range to quasi-local region \cite{koffel2012,vodola2014,lr_ising_vodola_2016,lr_ising_defenu_2020,lr_model_romanroche2023,vodola23_prb}. 

Let us demonstrate that the trends of multipartite entanglement in the WGS constructed from the underlying long-range model can capture the transition point in the fall-off rate $\alpha$.
 
\subsubsection{GGM of the weighted graph state in one dimension}
Let us first examine the behavior of GGM of the WGS generated via variable-range interaction for different values of $\alpha$ in one dimension (see Fig. \ref{fig:ggm_signal} in which the behavior of $\mathcal{G}$ with the valuation of time for different $\alpha$ is shown). For any  $\alpha$ in the weighted graph states, the  maximum value of GGM, $\mathcal{G}=0.5$, is achieved at $t=(2n-1)\pi \forall n\in \mathbb{N}$, since GGM in case of the evolving Hamiltonian with the nearest-neighbor interactions, i.e., $z=1$, is independent of $\alpha$ in Eq. (\ref{eq:ggm_open}) and $\cos \frac{t}{2r^\alpha} =0$ $\forall \alpha$ and $r=1$. To detect the transition between non-local and quasi-local or regimes, we use two identifiers, given by
\begin{equation}
    \bar{\mathcal{G}}_{2\pi}(\alpha)=\left.\frac{d\mathcal{G}(\alpha,t)}{dt}\right|_{t=2\pi};\quad \frac{d\mathcal{G}_{2\pi}}{d\alpha}(\alpha)=\frac{d}{d\alpha}\mathcal{G}(\alpha,t=2\pi).
\end{equation}
\begin{figure}[h]
    \centering
    \includegraphics[width=\linewidth]{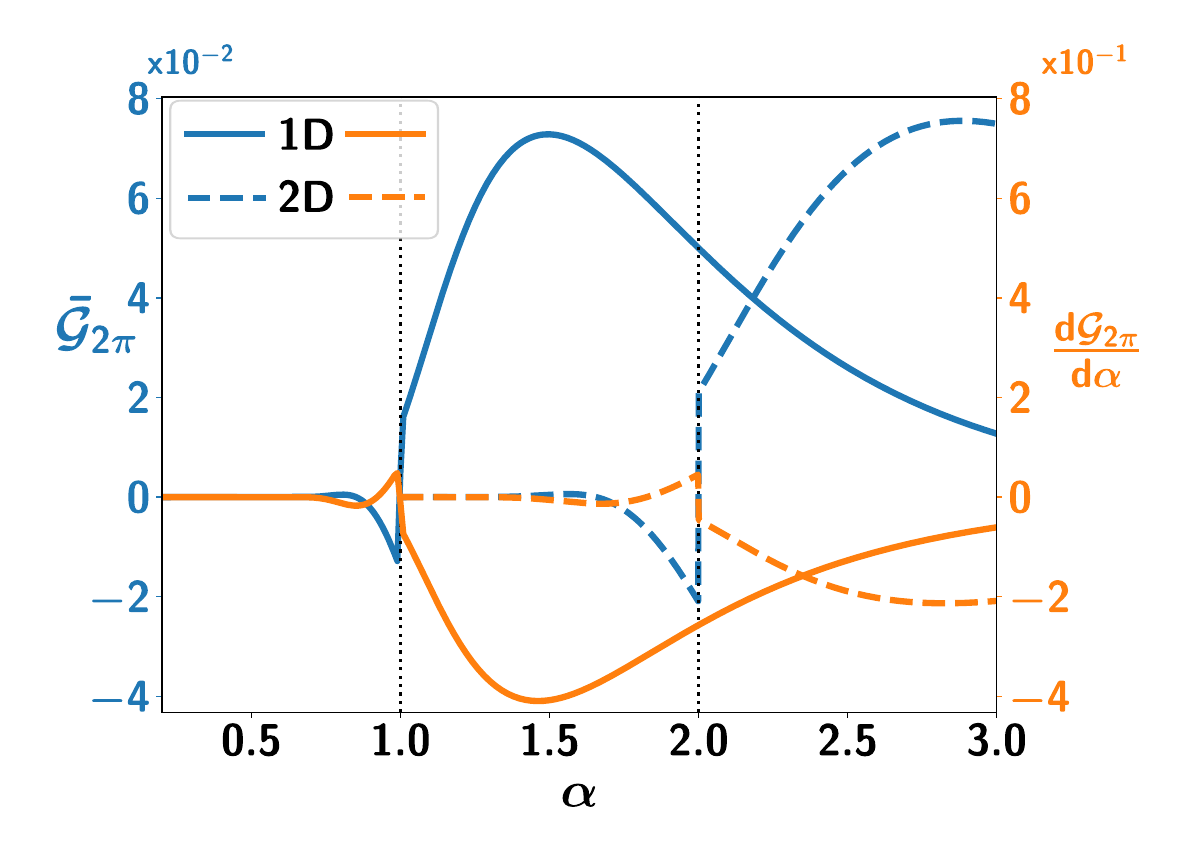}

    \caption{(Color online.) $\bar{\mathcal{G}}_{2\pi}$ (left vertical axis, blue/dark-shaded lines) and $\frac{d\mathcal{G}_{2\pi}}{d \alpha}$ (right vertical axis, orange/light-shaded) as a detector of non-local to quasi-local transition ($\alpha_{LR}^*$) against fall-off rate $\alpha$ (horizontal axis) for a $1$D chain (solid lines) and $2$D square lattice (dashed lines). While $\bar{\mathcal{G}}_{2\pi}$ is discontinuous at $\alpha_{LR}^*$, $\frac{d\mathcal{G}_{2\pi}}{d \alpha}$ is non-analytic, with the second derivative with respect to \(\alpha\) being discontinuous at $\alpha_{LR}^*$. The system sizes are  $N=5000$ for the $1$D chain and $40\times40$  for the $2$D square lattice with $z=N-1$. All axes are dimensionless. }
    \label{fig:detection}
\end{figure}
Let us first discuss the extreme points.  $\alpha=0$ corresponds to a uniform graph state where each vertex is connected to every other vertex with identical weights \cite{mahto_shaji_pra_2022}, and $\alpha\to\infty$ leads to a graph state with uniform weights involving the nearest-neighbor  vertices \cite{briegel01, rauss_pra2003}. It  becomes the cluster state  at $t=\pi,3\pi,5\pi,\ldots$ and GGM of the dynamical state obtained through the NN interacting Hamiltonian is oscillatory, with the time period of $2\pi$, i.e.,  $\mathcal{G}_{2\pi}=0$ and $\bar{\mathcal{G}}_{2\pi}=0$. For any finite $\alpha$, the weights of all connections contribute, such that these weights decrease polynomially with the increasing distance between vertices ($\phi_{i,i+r}=r^{-\alpha}$). For large $\alpha$, connections other than the nearest neighbors ($\phi_{i,i+1}=1)$ are extremely small and $\mathcal{G}_{2\pi}$ remains close to zero. With the decrease of $\alpha$, the long-range connections start to increase. For GGM at all times, typically $\mathcal{G}_{2\pi}$is nonvanishing since the absolute product of cosines at $t=2\pi$ in the second term of Eq. (\ref{eq:ggm_open}) keeps on decreasing from unity. On the other hand, at $\alpha=1$, the next-nearest neighbor make $\mathcal{G}_{2\pi}=0.5$ since $\cos\frac{2\pi}{4}=0$. When $\alpha$  goes below unity, weights of further distance makes the absolute cosine product almost near zero. This observation serves as a compelling incentive for employing GGM in the identification of the $\alpha^*_{LR}$ transition point. 

\subsubsection{Identifying transitions in falling rate in one dimension.}
First, notice that $\bar{\mathcal{G}}_{2\pi}$ vanishes until $\alpha \approx 0.7$ and it takes negative value upto $\alpha_{LR}^*=1$ while for $\alpha>\alpha_{LR}^*$, $\bar{\mathcal{G}}_{2\pi}$ abruptly changes to a positive value and starts increasing as shown in Fig. \ref{fig:detection}. Secondly, similar discontinuity at $\alpha_{LR}^*$ can also be observed from the quantity $\dv{\mathcal{G}(\alpha,t=2\pi)}{\alpha}=\dv{\mathcal{G}_{2\pi}}{\alpha}$. Both derivatives of GME content of the dynamical state, $\bar{\mathcal{G}}_{2\pi}(\alpha)$ and $\dv{\mathcal{G}_{2\pi}}{\alpha}$, can successfully identify the transition at $\alpha_{LR}^*=1$ from non-local to quasi-local regimes in one dimension by changing its characteristic instantly.

Let us determine whether the GME state is capable to detect another transition point present, $\alpha_{LR}^*$. To examine it, we first define the averaged GGM $\langle \mathcal{G}\rangle_T$ with averaging being performed over time, $t=T$ as 
\begin{eqnarray}
    \langle\mathcal{G}\rangle_T &=& \frac{\int_0^{T}\mathcal{G}(t)dt}{T}.
\end{eqnarray}
\begin{figure}[h]
    \centering
    \includegraphics[width=\linewidth]{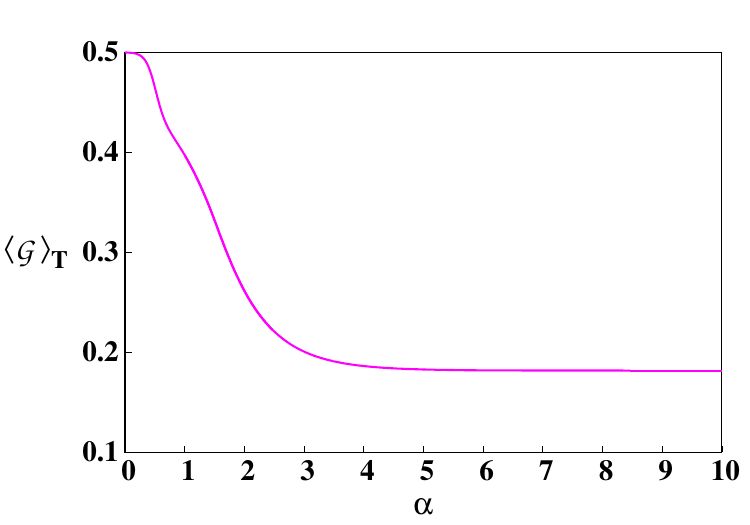}
    \caption{(Color online.) {\bf Nonlocal and quasilocal vs. local regimes.} The time-averaged value of GGM, $\langle\mathcal{G}\rangle_{T=3\pi}$ (vertical axis) upto time $T=3\pi$ against the interaction parameter, $\alpha$ (horizontal axis). Here the $N=10^6$-qubit state  is generated with coordination number \(z=N-1\)  . Starting from the maximum value of $\langle\mathcal{G}\rangle_{T=3\pi}=0.5$ at $\alpha=0$ corresponding to the end-to-end connected LR system with equal interaction strengths, it decreases drastically until the quasilocal region and finally saturates to $\langle\mathcal{G}\rangle_{T=3\pi}\approx 0.18$ for $\alpha \gtrsim 5$, i.e., in the local regime. Both the axes are dimensionless. }
    \label{fig:avg_ggm}
\end{figure}
Since $\mathcal{G}$ typically shows some repetitive behavior, with time $T$, the upper limit can be chosen according to such repetitions in $\mathcal{G}$. Extensive numerical simulations indicate that $T=3\pi$ is a good choice as it can capture all the features considered in this paper. For $\alpha\rightarrow \infty$ which corresponds to the nearest-neighbor case, we have $\langle\mathcal{G}\rangle_{T=3\pi}\equiv0.18$ while it reaches its maximum value $\langle\mathcal{G}\rangle_{T=3\pi}\equiv0.5$ for $\alpha=0$. In the intermediate $\alpha$, we observe contrasting behavior in $\langle\mathcal{G}\rangle_{T}$ when $0<\alpha<2$, i.e., the evolving Hamiltonian belongs to the long-range and quasi-local domains, $\langle\mathcal{G}\rangle$ sharply decreases with the increase of $\alpha$ as shown in Fig. \ref{fig:avg_ggm}, $\langle\mathcal{G}\rangle_{T}$ remains almost constant with $\alpha>>2$, i.e., when the dynamical state arises due to the short-range interaction. It indicates that the time-averaged GGM carries the signature of the transition point separating long-range and quasi-local models from the short-range model.

\begin{figure}[h]
    \centering
    \includegraphics[width=\linewidth]{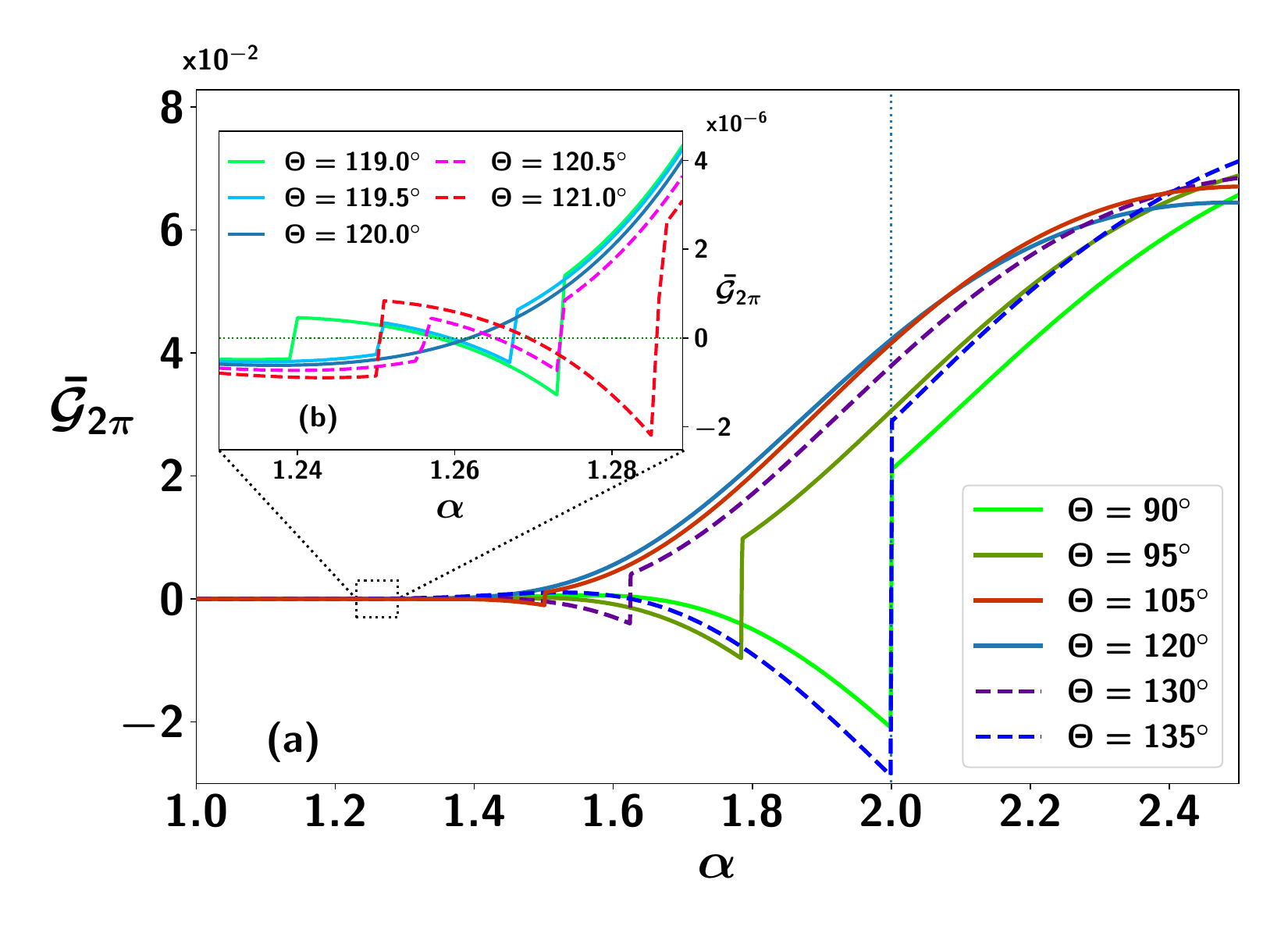}
    \caption{(Color online.) {\bf Detection of transition between nonlocal  and quasi-local orders. } $\bar{\mathcal{G}}_{2\pi}$ (ordinate) is shown at $\alpha_{LR}^*$ with respect to the fall-off rate $\alpha$ (abscissa) for different $2$D lattice geometries. (a) With increasing distortion angle, $\Theta$ from $90^\circ$ (the square lattice), $\alpha_{LR}^*$ (the point in which $\bar{\mathcal{G}}_{2\pi}$ shows discontinuity) decreases with \(\alpha\) till $\Theta=120^\circ$ (the honeycomb lattice). When the distortion angle is increased further, $\alpha_{LR}^*$ is now increasing with $\Theta$, and $\alpha_{LR}^*=2$ is obtained for $\Theta=135^\circ$. (b) Although $\bar{\mathcal{G}}_{2\pi}$ is continuous for the honeycomb lattice, it is still discontinuous when $\Theta$ is close to $120^\circ$, which can be used to obtain the transition point for the honeycomb lattice in the asymptotic limit. The system is $N=40\times40$ in the $2$D square lattice with all-to-all connection $(z=N-1)$. All axes are dimensionless.}
    \label{fig:detection_theta}
\end{figure}

\subsection{Determining transition in higher dimensional lattices via GGM.} 
Let us extend a similar analysis of the GME state acquired by the weighted graph state obtained in two-dimensional lattices. As illustrated in Fig. \ref{fig:schematic3}, we begin with a square lattice which is then deformed to other lattice structures with the deformation being given by an angle $\Theta$. Both $\bar{\mathcal{G}}_{2\pi}$ and $\frac{d\mathcal{G}_{2\pi}}{d\alpha}$ exhibit discontinuity at $\alpha_{LR}^*=2$, thereby showing their capability to detect the transition point present in a square lattice (see Fig. \ref{fig:detection_theta}).

Let us concentrate on the other deformed lattices and their transition points which, intuitively, depend on $\Theta$. While a similar behavior of GGM with time is observed in the weighted graph state on these distorted lattices, the transition point of fall-off rate $\alpha$ changes depending on $\Theta$ as evident from Fig. \ref{fig:detection_theta}. To make the study quantitatively, we introduce a quantity
\begin{eqnarray}
    \Delta\bar{ \mathcal{G}}_{2\pi}^{\Theta}=\bar{\mathcal{G}}(\alpha^*_{LR}+\delta,t)-\bar{\mathcal{G}}(\alpha^*_{LR}-\delta,t),
\end{eqnarray}
characterising the jump observed at $\alpha_{LR}^*$ for a small increment in \(\alpha_{LR}^*\), denoted as \(\delta\) (which we have taken as \(0.001\) for our investigations). As the distortion $\Theta$ is increased (from $90^\circ$, in the case of the square lattice), $\alpha_{LR}^*$ decreases from $2$ with the discontinuity ($\Delta\mathcal{\bar{G}}_{2\pi}$) being decreased. The trends continue with $\Theta$ till the lattice gets the honeycomb structure, i.e., $\Theta=120^\circ$, where $\mathcal{\bar{G}}_{2\pi}$ is continuous ($\Delta\mathcal{\bar{G}}_{2\pi}=0$) as shown in the inset, Fig. \ref{fig:detection_theta}(b). 
\begin{figure}[h]
    \centering
    \includegraphics[width=\linewidth]{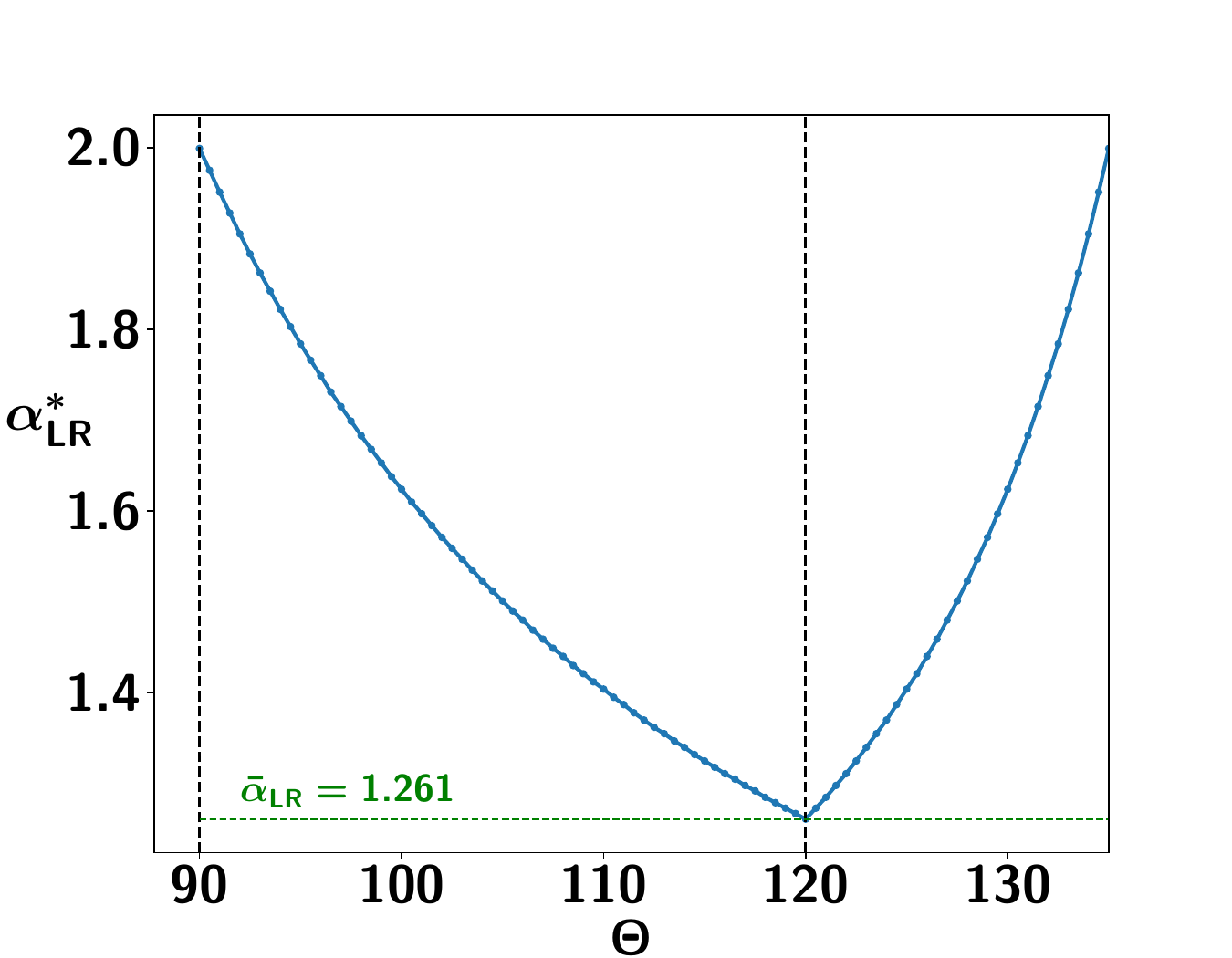}
    \caption{(Color online.) {\bf Identifying transitions from non-local to local regime in $2$D lattices.} Dependence of the non-local to quasi-local transition point, $\alpha_{LR}^*$ (vertical axis) on the distortion angle $\Theta$ (horizontal axis), is shown. The transition point in the honeycomb lattice ($\Theta=120^\circ$) is obtained asymptotically, where it has the minimum value. Notice that it will be interesting  to confirm that the predictions of transitions made by the GME content of the dynamical states  in the deformed square lattices match with the transition points obtained through the correlation length of the ground state.  All axes are dimensionless.}
    \label{fig:alpha_theta}
\end{figure}
Specifically, for $120^{\circ}>\Theta_1>\Theta_2>90^{\circ}$, we find $\Delta\mathcal{\bar{G}}_{2\pi}^{\Theta_1}<\Delta\mathcal{\bar{G}}_{2\pi}^{\Theta_2}$. The limiting case with $\Theta\to120^\circ$ corresponds to the $\alpha$ value ($\bar{\alpha}_{LR}$) where $\mathcal{\bar{G}}_{2\pi}=0$ continuously and changes its sign at $\alpha=\bar{\alpha}_{LR}$. Further, both $\alpha_{LR}^*$ and $\Delta\mathcal{\bar{G}}_{2\pi}$ start increasing, when $\Theta>120^{\circ}$. The transition point, $\alpha_{LR}^*=2$, also emerges for the distorted $2$D lattice with $\Theta=135^\circ$, but with higher $\Delta\mathcal{\bar{G}}_{2\pi}$ than of $\Theta=90^\circ$. The inset (Fig. \ref{fig:detection_theta}(b)) shows $\Delta\mathcal{\bar{G}}_{2\pi}$ is of the order of $10^{-6}$, which abruptly vanishes for $\Theta=120^\circ$. Notice that  another discontinuity in $\mathcal{\bar{G}}_{2\pi}$ is observed near $\Theta=120^\circ$ which decreases as $\Theta$ is far away from $120^\circ$ and the reason behind such observation is not clear from our paper. 

\begin{figure*}
\centering
    \includegraphics[width=\linewidth]{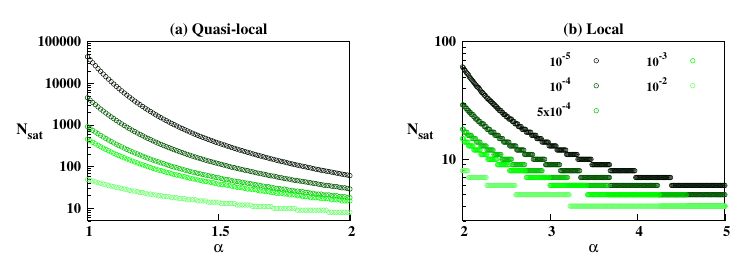}
    \caption{(Color online.) Saturation value of the number of qubits $N_{sat}$ (ordinate in logarithmic scale) against interaction parameter $\alpha$ (abscissa) for different values of error $\epsilon$ (mentioned in legends) in the (a) quasi-local and (b) local regions. Here \(z= N-1\). Both the axes are dimensionless.}
    \label{fig:Nsat-semilocal}
\end{figure*}

Summarizing the entire analysis establishes the dependence of $\alpha_{LR}^*$ on $\Theta$. In particular, $\alpha_{LR}^*$ decreases when $90^{\circ}<\Theta<120^{\circ}$  while it again increases with $120^{\circ}<\Theta<135^{\circ}$ (see Fig. \ref{fig:alpha_theta}). Notice that $\alpha_{LR}^*$ shows a sudden change at $\Theta=120^{\circ}$, although $\mathcal{\bar{G}}_{2\pi}$ is smooth for the honeycomb lattice, with the limiting value being $\bar{\alpha}_{LR}^*=\alpha_{LR}^*(\Theta\to120^\circ)=1.261\pm0.001$. 
It is worth noticing that when $135^{\circ}<\Theta<150^{\circ}$, \(\alpha_{LR}^*>2\) although beyond \(\Theta>150^{\circ}\), the unit chosen for NN sites is violated.   
Furthermore, studying the pattern of the GME state generated with variable-range interaction turns out to be an efficient method for detecting transition points in the fall-off rate. 

\subsection{Replicating the time-averaged GGM of an entire lattice with a smaller system size}
\label{sec:nsat}
We now concentrate on the reduction of the resource or the depth of the circuit. Specifically we wish to investigate the presence of a critical value concerning the total number of qubits at which the time-averaged multiparty entanglement saturates despite the further augmentation of qubit count. It can be achieved by minimizing the total number of qubits for all-connected weighted graph states with large number of qubits such that its time-averaged GGM  is emulated. To address this question mathematically, we define a saturation value in the total number of qubits, denoted as 
\begin{eqnarray}
    \nonumber \Delta\langle\mathcal{G}\rangle\equiv \text{min}_{N}| \langle\mathcal{G}(N+1,\alpha)\rangle_T - \langle\mathcal{G}(N,\alpha)\rangle_T |<\epsilon, \\
\end{eqnarray}
for an infinitesimal number $\epsilon$, and the corresponding $N$ represents $N_{sat}$. Here $\epsilon$ is fixed from the accuracy reached in the computation.

Let us illustrate that $N_{sat}$ also carries the signature of the transition point in $\alpha$. For examination, we fix  $\epsilon =10^{-2},10^{-3},5\times 10^{-3},10^{-4}$ with $T=3\pi$.
For $1<\alpha\lesssim 2$, the variation of $N_{sat}$ is extremely rapid, which is not the case for $\alpha>>2$. For example, with $\epsilon=10^{-4}$ in the LR model with $\alpha=1$, we obtain $N_{sat}=4521$, which decreases drastically to $N_{sat}=29$ for $\alpha = 2$. It declines further with the increase of $\alpha$ and $N_{sat}|_{\alpha = 3}=9$ and  $N_{sat}|_{\alpha = 5}=5$ in the local regime. Therefore, $N_{sat}$ becomes single-digit when only short-range interactions are involved in the evolving Hamiltonian while it is moderate in the quasi-local regime and possesses a very high value in the LR system.

With the decrease of $\epsilon$, the saturation value $N_{sat}$ grows rapidly in the region $1\lesssim\alpha\lesssim 2$, although beyond this region the rate of increase is much lower, i.e., the value of $N_{sat}$ remains unaltered as shown in Fig. \ref{fig:Nsat-semilocal} for $\alpha\gtrsim2$. Let us ask the question, ``What is the minimum number of qubits required so that the time-averaged GGM $\langle\mathcal{G}\rangle_{T=3\pi}$  saturates for a given $\alpha$?" As shown in Fig.  \ref{fig:avg_ggm}, with the model having $N=10^6$ sites, $\langle\mathcal{G}\rangle_{T=3\pi}=0.331971$ at $\alpha=1.5$. By careful analysis, we find that with $\epsilon=10^{-4}$ and with $ N_{sat}=117$, which is the moderate number of qubits, achievable even with current technologies, we can obtain $\langle\mathcal{G}\rangle_{T=3\pi}=0.331961$. However, in the presence of long-range interactions,  the error has a drastic effect on $\Delta\langle\mathcal{G}\rangle$ and it is nearly impossible to find $N_{sat}<<N$ in this domain.

\section{Mimicking the GGM from longrange with the short-range model}
\label{sec:mimick}
The emergence of exquisite traits in LR models that are often missing in few-body interactive systems has attracted lots of interest. Apart from the physical systems like trapped ions in which the long-range model arises naturally \cite{HAFFNER2008155}, there exist other physical systems including superconducting circuits where one requires two-qubit gates to generate interactions between distant sites \cite{ying23, PRXQuantumsuper}. With the increase of range of interaction, the number of two-qubit gates increases and hence the decoherence effects become prominent. Without compromising the production of the GME state, our objective is to decrease the range of interactions, i.e., $z$, which can mimic the entanglement properties of LR.
\begin{figure}[h]
    \centering
    \includegraphics[width=\linewidth]{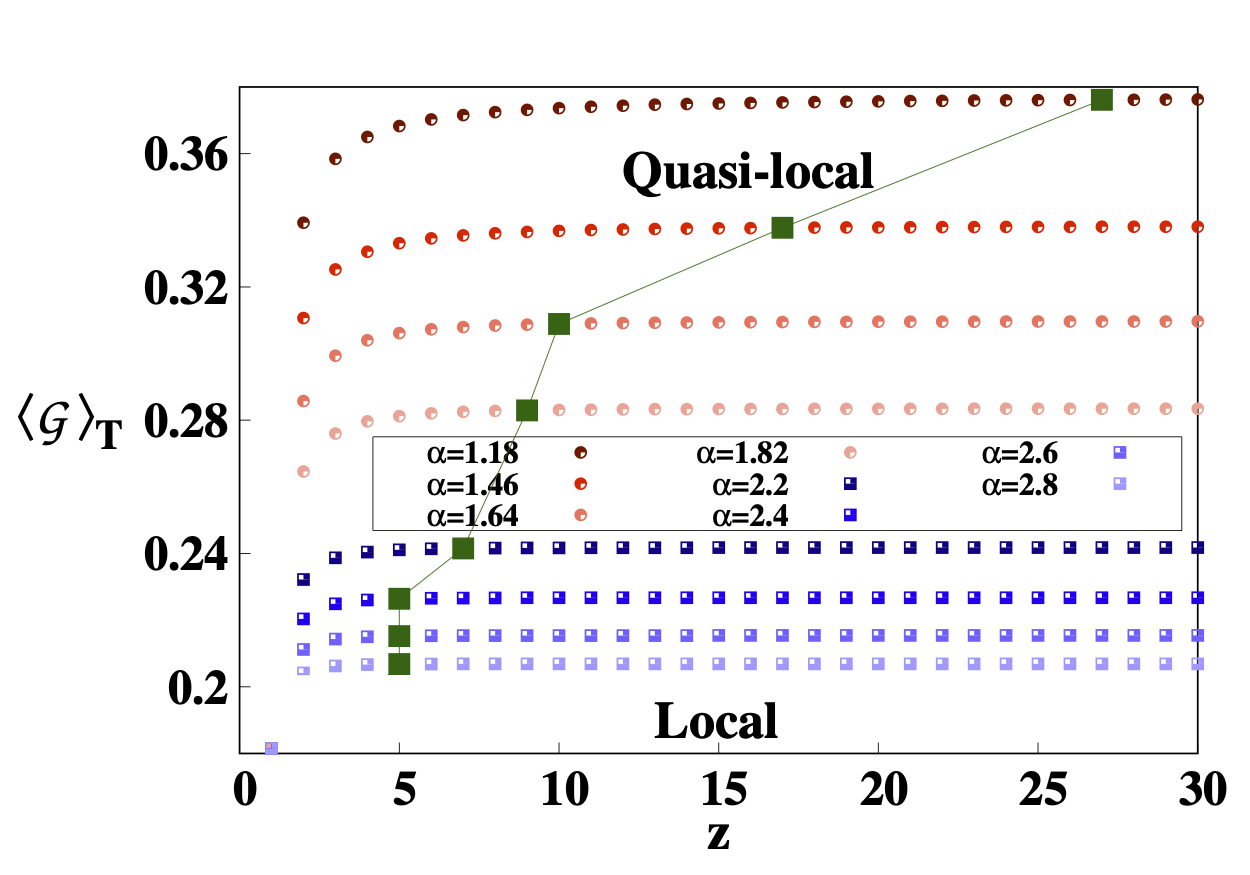}
    \caption{(Color online.) The time-averaged value of GGM, $\langle\mathcal{G}\rangle$ (\(y\)-axis) upto time $T=3\pi$ for $N=1200$ against coordination number, $z$ (x axis). Different patterns (and colors) indicate quasi-local and local regimes while different shades of a particular color correspond to different values of the interaction parameter $\alpha$. The dark green square boxes suggest the critical value $z=z_c$ as in Eq. (\ref{eq_zc}). Here \(\epsilon =0.0001\). Notice that the change of $z_c$ with the change of $z$ can predict the existence of $\alpha^*_{SR}$. Both the axes are dimensionless.}
    \label{fig:Z_critical}
\end{figure}
Specifically, for a fixed value of $N$,  we ask "what is the minimum coordination number required to produce GGM that can be obtained with the fully connected model?" We study the trade off between genuine multiparty entanglement content of a fully connected model and the same created by evolving the Hamiltonian with the finite range of interactions. 

Let us define the minimum coordination number. For a fixed $N$, and $\alpha$, if the difference of the time-averaged GGM with the variation of $z$ remains constant, we call the minimum $z$ in which $\langle \mathcal{G}\rangle$ saturates as $z_c$. Mathematically, we compute 
\begin{eqnarray}
    \Tilde{\Delta}\langle\mathcal{G}\rangle_{z_c} \equiv \min_{z}| \langle\mathcal{G}(z=N-1)\rangle_T - \langle\mathcal{G}(z)\rangle_T|<\epsilon.
    \label{eq_zc}
\end{eqnarray}
with $\epsilon$ being the infinitesimal number which leads to $z_c$. Like in the previous case for all values of $\alpha$, we set $T=3\pi$, $\epsilon\approx10^-3$ in the quasi-local and local regions. In the quasi-local and local regions, we observe that it is indeed possible to find $z_c$, i.e., there exist $z_c$ for a fixed $\alpha$ and $N$, above which $\langle \mathcal{G}\rangle_{T}$ is constant. For example, with $\alpha=1.82$ and $N=120$, the time-averaged GGM obtained with the LR model ($z=N-1$) can be attained with $z_c=41$. For different $\alpha<2$ values, such $z_c$ emerges for a fixed $N$ from $\langle \mathcal{G}\rangle_{3\pi}$ as depicted in Fig. \ref{fig:Z_critical}. Furthermore, the variation of $z_c$ with $\alpha$ can determine the critical point, $\alpha_{SR}^*$. Specifically, in the local regime, $z_c$ remains almost unaltered with $\alpha$ while its increase is drastic when $1\lesssim\alpha\lesssim 2$ (comparing Figs. \ref{fig:Nsat-semilocal}(a) and \ref{fig:Nsat-semilocal}(b)).

\section{ Strategy of disconnecting subgraphs in a WGS}
%Measurement in weighted graph state}
\label{sec:measurement_effect}
Upto now, we have investigated the entanglement pattern in the WGS and connect it with the transition points present in the LR Ising Hamiltonian. We now move our attention to two different queries.

(1) Does the reduced state become a WGS after removing the few redundant qubits through measurement from the lattice which are not useful for a certain task?

(2) Like graph states, does the algebra of Pauli measurements required for MBQC exist for WGS? 

We clearly obtain an affirmative answer for the first query while our paper provides an indication to develop such algebra for the WGS, which can be  a possible future direction of research.

Motivated by these questions, our objective, in this section, is to design a measurement strategy such that the remaining unmeasured parties share the WGS. Such scheme diminishes the qubit count from $N$ to $(N-m)$ while preserving an $(N-m)$-qubit WGS, all the while maintaining the original inter-qubit separation. Fig. \ref{fig:local_measurement} illustrates the schematic diagram  to achieve the same for $N=5$ and $m=3$ on which the local measurement operators act. Notice that the strategy presented below is valid for both 1D and 2D lattices.

\subsection*{ Towards qubit number reduction in a WGS.} Let us perform a single-qubit projective measurement $\{M_k^{j}|j=1,2\}$ on the qubit $k$ of the $N$-qubit WGS. Therefore, when the outcome $j$ occurs, the normalised  output state after tracing out the $k$th qubit can be presented as 
\begin{eqnarray}
    \nonumber &&\ket{\Psi(t)}_{N-1\{12\ldots {(k-1)} {(k+1)}\ldots N \}}^{j_k} \\ &&=\frac{M_k^{j}\ket{\Psi(t)}_{N\{1\ldots N\}}}{\sqrt{_{N\{1\ldots N\}}\bra{\Psi(t)}M_k^{j}\ket{\Psi(t)}_{N\{1\ldots N\}}}}.
\end{eqnarray}

\begin{theorem}
    Single qubit measurements in $\sigma_z$ basis on $m$ number of qubits situated in arbitrary position on an all-to-all connected $N$-qubit weighted graph state provide a local unitarily equivalent $(N-m)$-qubit weighted graph state with modified weights where initial qubit position remains fixed.
    \label{th:2}
\end{theorem}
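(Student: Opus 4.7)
The plan is to reduce the claim to a single-qubit measurement step and then iterate by induction on $m$. The crucial structural fact is that every controlled-phase gate $U_{ij} = \text{diag}(1,1,1,e^{-ig_{ij}(t)})$ is diagonal in the $\sigma_z$ eigenbasis and therefore commutes with every $\sigma_z$ projector, which is exactly what makes this basis the natural choice for cleanly disconnecting a vertex from the WGS without destroying the WGS structure of the remainder.

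First I would analyse a single measurement on some vertex $k$ of $\ket{\Psi_G(t)}_N = \prod_{i<j} U_{ij} \ket{+}^{\otimes N}$. Writing $U_{kl} = \ket{0}_k\bra{0}_k \otimes I_l + \ket{1}_k\bra{1}_k \otimes D_{kl}$ with $D_{kl} = \text{diag}(1, e^{-i g_{kl}})$, projection onto $\ket{0}_k$ renders every $U_{kl}$ attached to $k$ inert and, after normalisation, returns exactly $\ket{\Psi_{G\setminus\{k\}}(t)}_{N-1}$, i.e.\ the WGS on the graph with vertex $k$ deleted and all other weights $\phi_{ij}$ untouched. Projection onto $\ket{1}_k$ instead absorbs each $U_{kl}$ into the single-qubit diagonal $D_{kl}$ acting on qubit $l$, yielding $\big(\bigotimes_{l \neq k} D_{kl}\big)\,\ket{\Psi_{G\setminus\{k\}}(t)}_{N-1}$. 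Both outcomes occur with probability $1/2$ by the symmetry of $\ket{+}^{\otimes N}$, and both leave the $(N-1)$-qubit state as a WGS up to a tensor product of single-qubit $\sigma_z$-diagonal unitaries, i.e.\ up to a local unitary.

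The induction on $m$ is then essentially automatic. Suppose after $m-1$ measurements the survivors share a state of the form $W\,\ket{\Psi_{G'}(t)}_{N-m+1}$, where $G'$ is the original graph restricted to unmeasured vertices and $W = \bigotimes_l \text{diag}(1, e^{-i\theta_l})$ is a product of accumulated $\sigma_z$-diagonal phases collected from earlier outcomes. For the $m$-th measurement on a surviving vertex $k$, the factor $W$ enters only through its tensor component on qubit $k$, contributing either $1$ or the global phase $e^{-i\theta_k}$ depending on the outcome, while its remaining components commute through the projection and combine multiplicatively with any new diagonals generated by the current step. The single-step analysis then delivers an $(N-m)$-qubit WGS on $G' \setminus \{k\}$ dressed by local unitaries; the surviving weights are the restriction of the original adjacency matrix $\Phi_E$ to the unmeasured vertex set, ``modified'' only in the sense that, being inherited from $\Phi_E$, they no longer follow a fresh $r_{ij}^{-\alpha}$ law on the smaller lattice while the original qubit positions are preserved. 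The argument uses nothing beyond the diagonal form of $U_{ij}$, so it applies identically to 1D chains and to the deformed 2D lattices of Fig.~\ref{fig:schematic3}; the only real obstacle is the bookkeeping needed to track which accumulated diagonal phases are killed as global phases by later measurements versus which persist as genuine local unitaries on the survivors.
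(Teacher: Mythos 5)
Your proposal is correct and rests on exactly the same structural fact as the paper's proof: the controlled-phase gates and the $\sigma_z$ projectors are simultaneously diagonal, so measuring vertex $k$ either deletes it cleanly (outcome $\ket{0}_k$) or deletes it while depositing single-qubit diagonal phases on its former neighbours (outcome $\ket{1}_k$). The presentation differs in a way worth noting. The paper works coefficient-by-coefficient in the computational basis, tracking which of the $2^N$ basis labels survive the projection and relabelling them as $\eta - r2^{N-k}$ to identify the residual state as an $(N-1)$-qubit WGS; it then establishes local-unitary equivalence of the two outcome branches by comparing single-site reduced density matrices, observing that their off-diagonal entries differ only by the phase $e^{i\sum_r\phi_{lk_r}}$, and reading off $\mathbb{U}_l$ from that phase. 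You instead decompose each gate as $U_{kl}=\ket{0}_k\bra{0}\otimes I_l+\ket{1}_k\bra{1}\otimes D_{kl}$ and obtain the outcome-$1$ state directly as $\bigl(\bigotimes_{l\neq k}D_{kl}\bigr)\ket{\Psi_{G\setminus\{k\}}}$, so the local unitary is exhibited constructively rather than inferred from the marginals; your induction with the accumulated diagonal $W$ also handles the multi-measurement bookkeeping more transparently than the paper's iterated restatement. The two constructions agree (your $D_{kl}^{-1}=\mathrm{diag}(1,e^{ig_{kl}})$ coincides with the paper's $\mathbb{U}_l$ up to an irrelevant global phase and sign convention), and your reading of ``modified weights'' --- the surviving weights are simply the restriction of $\Phi_E$, no longer a fresh power law on the reduced lattice --- matches what the paper actually proves. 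No gap.
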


\begin{proof}
Let us rewrite an $N$-qubit weighted graph state in Eq. (\ref{eq:wgs}), following the power law as
\begin{eqnarray}
   \nonumber &&\ket{G(t)}_{N\{1,2,\ldots,k,\ldots,N\}} \\
   &&=\frac{1}{2^{N/2}}\sum_{\eta=0}^{2^N-1}\exp\Big(-i\sum_{i=1}^{N-1}\sum_{j>i}^{N} g_{i j}(t) a_i a_j\Big)\ket{\eta},
\end{eqnarray}
which consists of all possible states in the computational basis of $N$ qubits with $\ket{\eta}$ being the decimal equivalent of binary values. Performing measurement in  the $\sigma_z$ basis on the $k$th qubit and selecting the outcome $M^0_k=\ket{0}_k\bra{0}$, the output state takes the form
\begin{eqnarray}
    \nonumber &&\ket{G(t)}_{N\{1,2,\ldots,k,\ldots,N\}}^{0_k} \\ \nonumber
    &&=\frac{1}{2^{N/2}}\sum_{\substack{r=2\mathbb{Z}\\r<2^k}}\sum_{\eta=r2^{N-k}}^{(1+r)2^{N-k}-1}\exp\Big(-i\sum_{i=1}^{N-1}\sum_{j>i}^{N} g_{i j}(t) a_i a_j \Big)\ket{\eta}.\\
\end{eqnarray}
Tracing out the $k$th qubit reduces the binary value of each basis to $(\eta'-r2^{N-k})\forall r$ by which we write the corresponding state
\begin{eqnarray}
    \nonumber &&\ket{G(t)}_{(N-1)\{1,2,\ldots,k-1,k+1,\ldots,N\}}^{0_k} \\ \nonumber
    &&=\frac{1}{2^{(N-1)/2}}\sum_{\substack{r=2\mathbb{Z}\\r<2^k}}\sum_{\eta=r2^{N-k}}^{(1+r)2^{N-k}-1}\exp\Big(-i\sum_{i=1}^{N-1}\sum_{j>i}^{N} g_{i j}(t) a_i a_j \Big)\\ \nonumber &&\ket{\eta-r2^{N-k}}.\\
    \label{eq:eta-r2nk}
\end{eqnarray}
\begin{figure}[h]
    \centering
    \includegraphics[width=\linewidth]{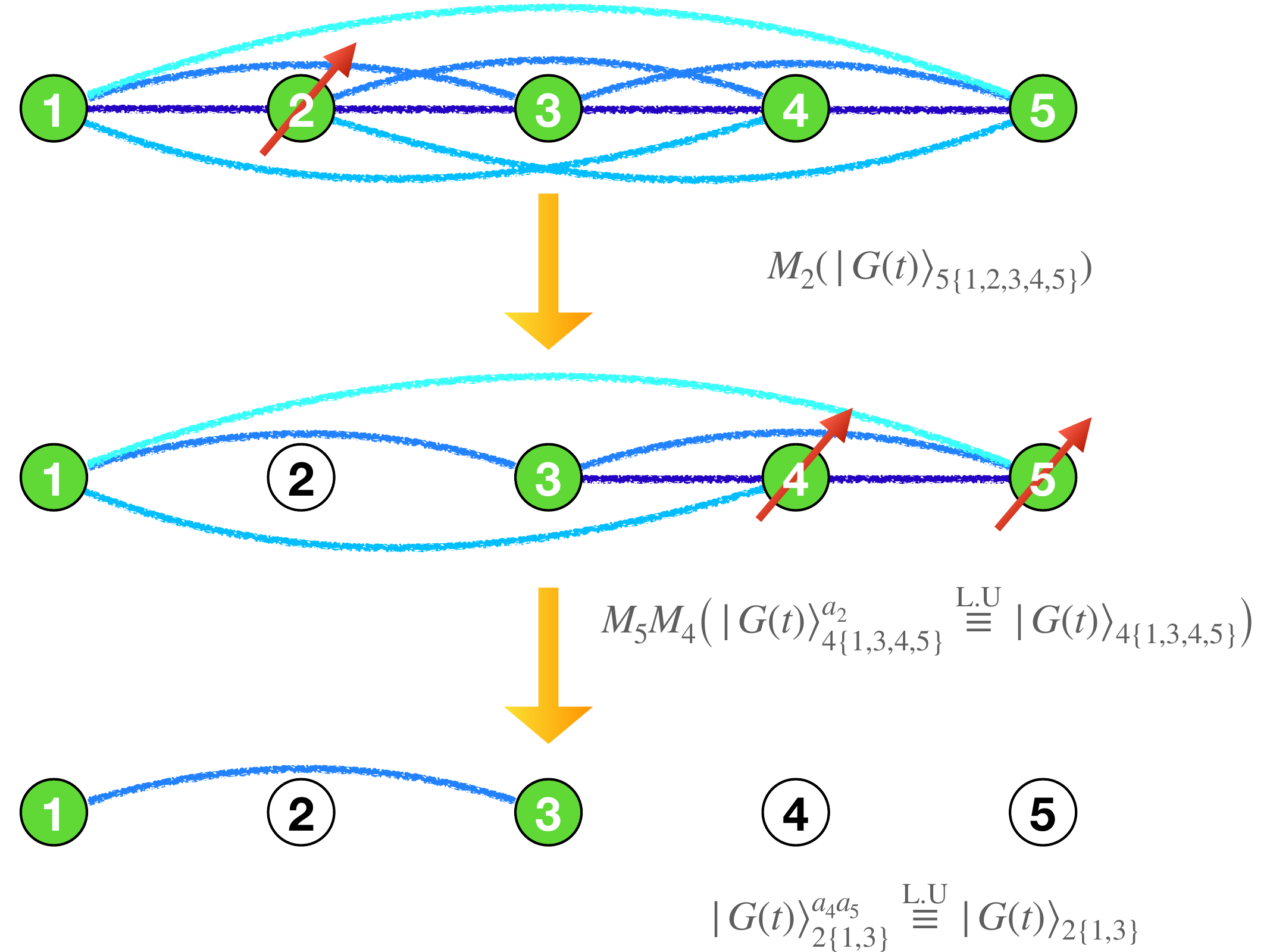}
    \caption{{\bf Decoupling scheme of a required circuit via local measurements.} Schematic diagram of the protocol for reducing the number of qubits in WGS through local measurements along the $z$-axis. For illustration, let us consider a five-party WGS, $\ket{G(t)}_{5\{1,2,3,4,5\}}$. Performing the first measurement $M_2$ on the second  qubit, the four-party state becomes (upto normalization) $M_2(\ket{G(t)}_{5\{1,2,3,4,5\}})=\ket{G(t)}^{a_2}_{4\{1,3,4,5\}}$ which is local unitarily equivalent to $\text{LU}\ket{G(t)}_{4\{1,3,4,5\}}$. Again performing successive measurements in qubits $4$ and $5$,  the two-party output state takes the form $M_5M_4(\ket{G(t)}_{4\{1,3,4,5\}})=\ket{G(t)}^{a_5a_4}_{2\{1,3\}}\equiv\text{LU}\ket{G(t)}_{2\{1,3\}}$. }
    \label{fig:local_measurement}
\end{figure}
Eventually for each possible value of $\eta$ and $r<2^k$, all possible terms in Eq. (\ref{eq:eta-r2nk}) form a basis of the $(N-1)$-qubit WGS state. Therefore, we can write 
\begin{eqnarray}
    \nonumber &&\ket{G(t)}_{(N-1)\{1,2,\ldots,k-1,k+1,\ldots,N\}}^{0_k} \\ \nonumber&&=\frac{1}{2^{(N-1)/2}}\sum_{\zeta=0}^{2^{N-1}-1}\exp\Big(-i\sum_{\substack{i=1\\i\neq k}}^{N-1}\sum_{\substack{j>i\\j\neq k}}^{N} g_{i j}(t) a_i a_j \Big)\ket{\zeta}\\ \nonumber &&=\prod_{\substack{i=1\\i\neq k}}^{N-1} \prod_{\substack{j>i\\j\neq k}}^{N}U_{ij} \ket{+}_1\ket{+}_2\otimes\ldots\otimes\ket{+}_{k-1}\ket{+}_{k+1}\otimes\ldots\otimes\ket{+}_N \\ \nonumber &&=\ket{{G}(t)}_{(N-1)\{1,2,\ldots,k-1,k+1,\ldots,N\}},
\end{eqnarray}
where $\ket{\zeta}\equiv\ket{a_1 a_2 \ldots a_{k-1} a_{k+1}\ldots a_N}$. 
Similarly, measuring and tracing out site $l$ on the $(N-1)$-qubit WGS state in the computational basis leads to the $(N-2)$-qubit WGS state when the outcome is $M_l^0$ and so on. After $m$ number of such measurements, we obtain the $(N-m)$-qubit WGS state, when $M^0$ clicks in all the $m$ sites. Specifically, measuring on $k_1,\ldots,k_m$ sites, the resulting state is the $(N-m)$-qubit WGS, denoted as $\ket{{G}(t)}^{0_{k_1}\ldots 0_{k_m}}_{(N-m)\{k\}_{k=1;k\neq k_1,\ldots, k_m }^N}\equiv\ket{{G}(t)}_{(N-m)\{k\}_{k=1;k\neq k_1,\ldots, k_m }^N}$. The single-site density matrix corresponds to the site $l$ as
\begin{eqnarray}
   \rho_l^0(k_1,\ldots,k_m) &=&\begin{pmatrix}
    \frac{1}{2} & x\\
    x^* & \frac{1}{2}
  \end{pmatrix} \equiv \rho_l,
\end{eqnarray}
where
\begin{eqnarray}
    x=\frac{1}{2^{N-m}}\prod_{\substack{j=1 \\ j\neq l, k_1,\ldots,k_m}}^N(1+e^{i \phi_{lj}}),
\end{eqnarray}
and $\rho_l^0(k_1,\ldots,k_m)$ denotes the $l$th side density matrix with the measurement outcome on $k_1,\ldots,k_m$ qubits being $M^0$.

Let us consider the situation when $M^1$ clicks. The single-site density matrix changes to $\rho_l^1(k_1,\ldots,k_m)$ with different off-diagonal entries modified by a phase factor as
    \begin{eqnarray}
    \rho_l^1(k_1,\ldots,k_m)=\begin{pmatrix}
    \frac{1}{2} & y\\
    y^* & \frac{1}{2}
  \end{pmatrix},
\end{eqnarray}
where
\begin{eqnarray}
    y=\frac{1}{2^{N-m}}e^{i\sum_{r=1}^m \phi_{lk_r}}\prod_{\substack{j=1 \\ j\neq l, k_1,\ldots,k_m}}^N(1+e^{i \phi_{lj}}).
\end{eqnarray}
The eigenvalues of $\rho_l^1(k_1,\ldots,k_m)$ and $\rho_l^0(k_1,\ldots,k_m)\equiv\rho_l$ are the same due to the fact that off-diagonal entries differ only by a phase factor. Therefore, the output state obtained with the outcome $M^1$ is local unitarily connected with the state having $M^0$ outcome. Here the local unitary at each site is given by 
%$\ket{G(t)}^{1_{k_1},\ldots,1_{k_m}}_{(N-m)\{k\}_{k=1;k\neq k_1,\ldots, k_m }^N}$ is local unitarily connected to $\ket{G(t)}^{0_{k_1},\ldots,0_{k_m}}_{(N-m)\{k\}_{k=1;k\neq k_1,\ldots, k_m }^N}$ where local unitary at each site has the following form
\begin{eqnarray}
    \mathbb{U}_l=\begin{pmatrix}
    e^{i\sum_{r=1}^m \phi_{lk_r}} & 0 \\
    0 & 1
  \end{pmatrix},
\end{eqnarray}
i.e., 
\begin{eqnarray}
   \nonumber &&\ket{G(t)}^{0_{k_1},\ldots,0_{k_m}}_{(N-m)\{k\}_{k=1;k\neq k_1,\ldots, k_m }^N}\\&&= \bigotimes_{l\in k}\mathbb{U}_l\ket{G(t)}^{1_{k_1},\ldots,1_{k_m}}_{(N-m)\{k\}_{k=1;k\neq k_1,\ldots, k_m }^N}.
\end{eqnarray} 
Notice that $\mathbb{U}_l \in SU(2)$, in general, although for some particular values of $\sum_{r=1}^m \phi_{lk_r}=2n\pi,(4n+3)\frac{\pi}{2},(4n+1)\frac{\pi}{2}$, the local unitary $\mathbb{U}_l$ becomes Clifford unitary \cite{nielsen_chuang_2010} as it preserves the single-qubit Pauli group. Finally, it is clear that for total $m$ number of measurements on arbitrary  qubits, $k_1,\ldots,k_m$, we have to apply the local unitary on each qubit depending on the outcome at every step of the measurement  except $k_1,\ldots,k_m$, i.e., 

\begin{eqnarray}
    &&\ket{G(t)}_{(N-m)\{k\}_{k=1;k\neq k_1,\ldots, k_m }^N}\\ \nonumber&&= \prod_{j=1}^{m}\bigg(\mkern-18mu\bigotimes_{\substack{l_j=1 \\ l_j\neq k_1,\ldots, k_m}}^N \mkern-18mu \big( (1-a_{k_j}) \mathbb{I_{l_j}}+ a_{k_j}\mathbb{U}_{l_j})\bigg)\ket{G(t)}^{a_{k_1},\ldots,a_{k_m}}_{(N-m)\{k\}_{k=1;k\neq k_1,\ldots, k_m }^N}.
\end{eqnarray}
\end{proof}
Let us illustrate Theorem \ref{th:2} with an example. For $N=3$, after measuring and tracing out the second qubit, the output state becomes
\begin{eqnarray}
    \ket{G(t)}_{2\{1 3\}}^{0_2} =\frac{1}{2}(\ket{00}+\ket{01}+\ket{10}+e^{-i g_{1 3}}\ket{11}),
    \label{eq:3qubit_0}
\end{eqnarray}
when $\ket{0}$ clicks, which  is equivalent to $\ket{G(t)}_{2\{1 3\}} = U_{1 3}\ket{+}_{1}\ket{+}_{3}$ where the qubits at position $1$ and $3$ are connected. The output state of the outcome, $M^1=\ket{1}\bra{1}$, can be written as
\begin{eqnarray}
    \nonumber \ket{G(t)}_{2\{13\}}^{1_2}&=&\frac{1}{2}(\ket{00}+e^{-ig_{23}}\ket{01}+e^{-ig_{12}}\ket{10}\\ &+&e^{-i(g_{12}+g_{13}+g_{23})}\ket{11}),
\end{eqnarray}
which is equivalent to $\ket{G(t)}_{2\{13\}}$ up to the local unitary, $\mathbb{U}_1\otimes \mathbb{U}_3 = \text{diag}(e^{i\phi_{21}},1)_1\otimes \text{diag}(e^{i\phi_{23}},1)_3   $. Similarly, starting from $\ket{G(t)}_{5\{12345\}}$, we can generate $\ket{G(t)}_{4\{1345\}}$, $\ket{G(t)}_{3\{135\}}$, and $\ket{G(t)}_{2\{13\}}$ successively by measuring $\sigma_z$ on the second, fourth, and fifth qubits respectively as shown in Fig. \ref{fig:local_measurement}.

Furthermore, the rules of $\sigma_y$ and $\sigma_z$ measurements on the WGS, which turn out to be significantly complex and non-trivial compared to the graph states, can be an exciting research direction towards implementing MBQC.

\section{Conclusion}
	\label{sec:conclu}
 
MBQC is a promising candidate to implement quantum circuits in laboratories and hence it is crucial to identify the ingredient, known as the cluster state, which belongs to the set of stabilizer states required for performing MBQC perfectly. They are typically produced by using the nearest-neighbor Ising Hamiltonian from a product state. Instead of NN interactions, if the product state is evolved with variable-range interactions, weighted graph states possessing exquisite characteristics can be produced, exhibiting some features of the evolving Hamiltonian that NN interactions cannot reveal. We presented the exact expression of GGM as a function of fall-off rate, range of interactions, and time for any system size, in both one and two dimensions with varied geometries.
It is worth noting that producing GGM for an arbitrary number of sites is computationally demanding since it necessitates calculating the maximum eigenvalues of all possible reduced density matrices and hence with the increase of system size, the computational complexity grows exponentially.
 In this case, we proved that the maximum eigenvalue obtained only from the single-site reduced density matrix contributes to the GGM.
 We demonstrated that the time derivative of GGM in one and two dimensions, including lattices formed after deformation from square one, can identify transitions in fall-off rates from long-range to quasi-local regions. On the other hand, we found that the saturation of time-averaged GGM  with system size as well as fall-off rates can be employed to predict transitions from quasi-local to local regimes in the evolving Hamiltonian. Furthermore, we observed that, for a certain system size, GGM generated from a finite-range Hamiltonian resembles the behavior of GGM in the long-range models under the quasi-local and local regimes.

    We also investigated the effect of measurement along a certain direction on the weighted graph state. We showed that the resultant weighted graph state is local unitarily equivalent to another weighted graph state with fewer qubits, and obtained the relationship between the weights of the premeasured and the postmeasured states. 
   The present findings demonstrate the potential of the WGS as a resource for quantum communication and computation tasks. 
    
    %Our analysis builds up on the use of the weighted graph states as a resource for measurement-based quantum computation.

 \acknowledgements
	The authors acknowledge the support from  the Interdisciplinary Cyber Physical Systems (ICPS) program of the Department of Science and Technology (DST), India, Grant No.: DST/ICPS/QuST/Theme- 1/2019/23.  We  acknowledge the use of the cluster computing facility at the Harish-Chandra Research Institute.

\appendix
\section{Calculation of reduced density matrices of a WGS from the PEPS description}
\label{sec:WGS_PEPS}
Going to the PEPS description \cite{verstraete_pra_2004, dur_hartmann_prl_2005, hartmann_briegel_2007} of $\ket{\Psi^\prime}$ in Eq. (\ref{eq:psi_reduced}) based on $G^\prime$, each qubit in $B$ is replaced by $n$ virtual qubits (one for each site in $A$), and similarly, each qubit in $A$ is replaced by $N-n$ virtual qubits (one for each site in $B$). The  $\ket{\Psi^\prime}$ in terms of the virtual qubits can be represented as
\begin{equation}
    \ket{\bar{\Psi}^\prime}=\bigotimes_{k\in A,l\in B}U_{k,l}\ket{+}_{\bar{k}_l}\ket{+}_{\bar{l}_k}.
\end{equation}

\subsection{Action of projection on virtual qubits in $B$}
For simplicity, consider $N=3$ with $A=\{1,2\}$ and $B=\{3\}$. Denoting the virtual qubits of $1$ as $\bar{1}_3$, of $2$ as $\bar{2}_3$, and of $3$ as $\bar{3}_1$ and $\bar{3}_2$, in the PEPS form, the state is
\begin{eqnarray*}
    \ket{\bar{\Psi}^\prime}_{N=3} &=&\left(U_{13}\ket{+}_{\bar{1}_3}\ket{+}_{\bar{3}_1}\right) \left(U_{23}\ket{+}_{\bar{2}_3}\ket{+}_{\bar{3}_2}\right)\\
    &=&\ket{+}_{\bar{1}_3}\ket{+}_{\bar{2}_3}\ket{0}_{\bar{3}_1}\ket{0}_{\bar{3}_2} + \ket{\zeta_1}_{\bar{1}_3}\ket{\zeta_2}_{\bar{2}_3}\ket{1}_{\bar{3}_1}\ket{1}_{\bar{3}_2} + \dots
\end{eqnarray*}
with $\ket{\zeta_k}_{\bar{k}_3}=(\ket{0}_{\bar{k}_3}+e^{-ig_{k3}}\ket{1}_{\bar{k}_3})/\sqrt{2}$ for $k=1,2$ and only the terms surviving in the projections that follow, are shown after the action of unitaries. After projection from virtual to physical qubit in position $3$, we get
\begin{eqnarray*}
    P_3\ket{\bar{\Psi}^\prime}_{N=3} &=&\Big(\ket{0_3}\bra{0_{\bar{3}_1}0_{\bar{3}_2}}+\ket{1_3}\bra{1_{\bar{3}_1}1_{\bar{3}_2}}\Big)\ket{\bar{\Psi}^\prime}_{N=3}\\
    &=&\ket{++}_{\bar{1}_3\bar{2}_3}\ket{0}_3 + \ket{\zeta_1\zeta_2}_{\bar{1}_3\bar{2}_3}\ket{1}_3\\
    &=&\sqrt{2}U_{13}U_{23}\ket{+}_{\bar{1}_3}\ket{+}_{\bar{2}_3}\ket{+}_3,
\end{eqnarray*}
which is not normalized. Therefore, for the $N$-qubit state, the action of projections $P_l$ on qubits in B gives a state (upto normalization) of the form $\ket{\Psi^P}=\bigotimes\limits_{l\in B}^{}\left[\prod_{k\in A}U_{kl} \ket{+}_{\bar{k}_l}\ket{+}_l\right]$ that is a tensor product of $N-n$ states. This allows us to individually trace out each qubit in B.

\subsection{Partial trace over $B$}
Let us consider again the $N=3$ case with $A=\{1\}$ (virtual qubits $\bar{1}_2, \bar{1}_3$) and $B=\{2,3\}$. Then from the previous discussion, $\ket{\Psi^P}=\Big(U_{12}\ket{+}_{\bar{1}_2}\ket{+}_2\Big)\bigotimes\Big(U_{13}\ket{+}_{\bar{1}_3}\ket{+}_3\Big)$ and the partial trace over $\{2,3\}$ can be done independently, by which we obtain $\bar{\rho}_1=\bar{\rho}_1(2)\bigotimes\bar{\rho}_1(3)$, where $\bar{\rho}_1(l)=\ket{+}_{\bar{1}_l}\bra{+}+\ket{\zeta_l}_{\bar{1}_l}\bra{\zeta_l}$, with $\ket{\zeta_l}_{\bar{1}_l}=(\ket{0}_{\bar{1}_l}+e^{-ig_{1l}}\ket{1}_{\bar{1}_l})/\sqrt{2}$ for $l=1,2$.

It is important to note that, although not normalized, each $\bar{\rho}_1(l)$ is a positive semi-definite matrix with all diagonal values as $1$. Generalizing this effect to the $N$-qubit WGS, with bipartition $A$ of $n$ sites, the reduced density matrix is
\begin{equation}
     \bar{\rho}_A = \bigotimes_{l\in B}\bar{\rho}_A(l) = \bigotimes_{l\in B}2^{n-1}\Big( \ket{+}_{\bar{1}_l\bar{2}_l\dots\bar{n}_l}\bra{+}+\ket{\zeta_l}_{\bar{1}_l\bar{2}_l\dots\bar{n}_l}\bra{\zeta_l}\Big),
     \label{eq:rho_l}
\end{equation}
where $\ket{\zeta_l}_{\bar{1}_l\bar{2}_l\dots\bar{n}_l}= \bigotimes_{k=1}^{n}(\ket{0}_{\bar{k}_l}+e^{-ig_{kl}}\ket{1}_{\bar{k}_l})/\sqrt{2}$ for $l\in B$ and each $\bar{\rho}_A(l)$ is a $2^n\times2^n$ positive semi-definite matrix, scaled by $2^{n-1}$ so that all the diagonal values of each $\bar{\rho}_A(l)$ are unity. Note that the weights are encoded in the off-diagonal terms as phases. Proper normalization of $\bar{\rho}_A$ is done only at the end after the projection on virtual qubits in $A$.

\subsection{Action of projection on virtual qubits in $A$}
For the above case of $N=3$ with $A=\{1\}$ (virtual qubits $\bar{1}_2, \bar{1}_3$) and $B=\{2,3\}$,  $\bar{\rho}_1(l)=\ket{0}_{\bar{1}_l}\bra{0}+x_{1l}\ket{1}_{\bar{1}_l}\bra{0}+x^*_{1l}\ket{0}_{\bar{1}_l}\bra{1}+\ket{1}_{\bar{1}_l}\bra{1}$, where $x_{1l}=\frac{1+e^{-ig_{1l}}}{2}$ for $l=2,3$. From the form of the inevitable projection $P_k$ for $k\in A$, the only terms in the tensor product of virtual qubits that will contribute are ones formed only via $\ket{00}_{\bar{1}_2\bar{1}_3}, \ket{11}_{\bar{1}_2\bar{1}_3}$ and their dual vectors. After applying $P_k$, we finally get
\begin{equation}
    \rho^\prime_1=
    \begin{pmatrix}
        1                & x_{12}x_{13}\\
        x^*_{12}x^*_{13} & 1
    \end{pmatrix}
    = \bar{\rho}(2)\ostar\bar{\rho}(3),
\end{equation}
where $\ostar$ is the Hadamard product. This can be generalized for $A$ with $n$ sites, as $\rho^\prime_A=\underset{l\in B}{\ostar}\bar{\rho}_A(l)$, where $\bar{\rho}_A(l)$ is represented as in Eq. (\ref{eq:rho_l}) scaling all diagonal entries to unity. Finally, $\rho^\prime_A$ can be normalized as $\rho_A=\frac{1}{2^n}\rho^\prime_A$.

\section{Efficient calculation of the GGM of a WGS}
\label{sec:WGS_GGM}
\begin{figure}[h]
    \centering
    \includegraphics[width=\linewidth]{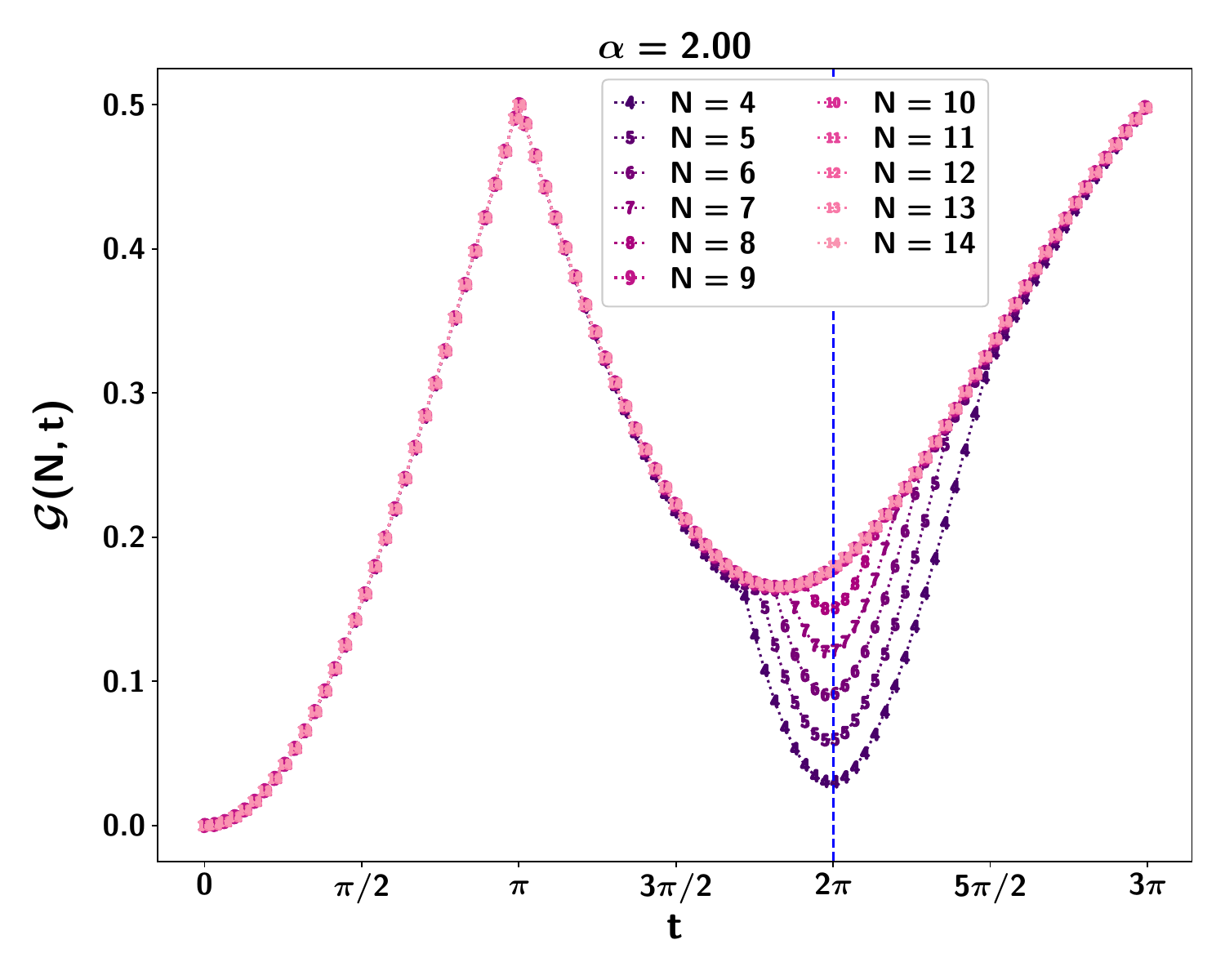}
    \caption{(Color online.) GGM, $\mathcal{G}$ (vertical axis) against time $t$ (horizontal axis) for $N=4$ to $14$ and all-to-all connected WGS ($z=N-1$, $\alpha=2.0$) in the $1$D lattice with open boundary condition. Different lines represent different system sizes $N$. Although $\mathcal{G}$ reaches maximum value $0.5$ at $t=\pi \:\forall N$ (also irrespective of $\alpha$), it depends on system size $N$, mostly around $t=2\pi$ and the dependence of  GGM on system size $N$ disappears with increasing $N$. Both the axes are dimensionless. }
    \label{fig:ggm_smallN}
\end{figure}
\begin{figure}[h]
    \centering
    \includegraphics[width=\linewidth]{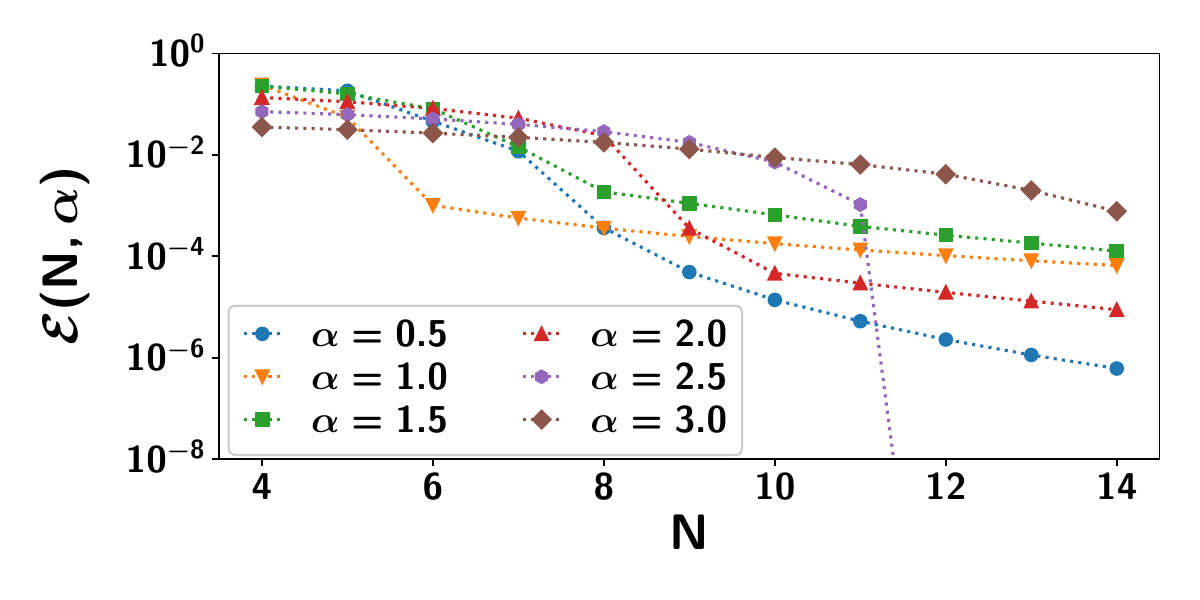}
    \caption{(Color online.) The absolute mean difference $\mathcal{E}$ (ordinate) defined in Eq. (\ref{eq:abs_er}) vs system $N$ (abscissa). As clearly visible,  $\mathcal{E}$ scales exponentially with  $N$, irrespective of \(\alpha\) values. The vertical axis is logarithmic in scale and both the axes are dimensionless. }
    \label{fig:ggm_smallN_err}
\end{figure}

The GGM of a pure state $\ket{\Psi}$ as defined in Eq. (\ref{eq:ggm_def}) becomes computationally hard with increasing system size $N$. The complexity arises because both the number of possible bipartitions and the number of the Schmidt coefficients increase exponentially with the increase of $N$. 

Interestingly, GGM $\mathcal{G}(N,\alpha,t)$ depends on the system size $N$, for small $N$, which is more pronounced around $t=2\pi$ as shown in Fig. \ref{fig:ggm_smallN}. Specifically, $\mathcal{G}(N,\alpha,t)$ shows non-analytic decrease around $t=2\pi$, for small $N$. We numerically find that this is because  bipartition from which the maximum Schmidt coefficient contributes in the computation of GGM changes with the variation of time. For $t<\pi$, $\mathcal{G}$ is always obtained from a single site density matrix, i.e., $|A|=1$. Specifically, the bipartition, $A=\{1\}, B=\{2,3,\dots N\}$ contains the maximum Schmidt coefficient,  as given in Eq. (\ref{eq:single_site}). Therefore, $\mathcal{G}(N,\alpha,t<\pi) = \mathcal{G}_1(N,\alpha,t) = 1-\max\{\omega_1, \omega_2\}$, where $\omega_1$ and $\omega_2$ are the two eigenvalues of $\rho_1$ (i.e., the squares of the corresponding Schmidt coefficient) and the subscript ``\(1\)" in \(\mathcal{G}_1\) indicates that the GGM comes from the eigenvalues of \(\rho_1\).  

Let us now analyze the difference between the  GGM profile obtained by considering all bipartitions and the GGM \(\mathcal{G}_1\), computed only by using the single-site density matrix.  In order to determine this deviation, we define the absolute mean difference as
\begin{equation}
    \mathcal{E}(N,\alpha) = \langle|\mathcal{G}_1(N,\alpha,t)-\mathcal{G}(N,\alpha,t)|\rangle_T,
    \label{eq:abs_er}
\end{equation}
where the average $\langle. \rangle_T$ is taken over time from $t=0$ to $T=3\pi$.
The absolute mean difference $\mathcal{E}(N, \alpha)$, decreases exponentially with increasing system size $N$. For example, the difference is of the order of $10^{-4}$ for $N=14$ for different fall-off rates $\alpha$ as shown in Fig. \ref{fig:ggm_smallN_err}. This shows that for large $N$ GGM can be efficiently computed from the  reduced density  matrix, \(\rho_1\),  independent of the fall-off rates $\alpha$ and time.
This analysis also reveals that the GGM of the weighted graph state with open boundary condition is an edge property. Similar results can be obtained for $2$D \(L\times L\) square lattices, with exact calculations performed for $L=3$ and $4$, in which the GGM is again obtained from the corners of the lattice. 

\bibliography{bib.bib}
\end{document}